\newcommand{\AlgNWF}{{\textsc{Greedy}}\xspace}
\newcommand{\AlgRDP}{{\textsc{RDPGreed}}\xspace}
\newcommand{\AlgPG}{{\textsc{PresGreed}}\xspace}
\newcommand{\AlgSPG}{{\textsc{StocPresGreed}}\xspace}
\newcommand{\AlgSPH}{{\textsc{Sphere}}\xspace}
\newcommand{\AlgHS}{{\textsc{RMS-HS}}\xspace}
\newcommand{\AlgHDR}{{\textsc{HD-RRMS}}\xspace}
\newcommand{\AlgHDG}{{\textsc{HD-Greedy}}\xspace}
\newcommand{\AlgKER}{{\textsc{Eps-Kernel}}\xspace}
\newtheorem{theorem}{Theorem}
\newtheorem{lemma}{Lemma}
\newtheorem{example}{Example}
\newtheorem{heuristic}{Heuristic}
\DeclareMathOperator*{\argmax}{arg\,max}
\newcommand{\myparagraphnew}[1]{\vspace{1mm} \noindent \textbf{#1}}
\newcommand{\ie}{\emph{i.e.,}\xspace}
\newcommand{\eg}{\emph{e.g.,}\xspace}
\newcommand{\myparagraph}[1]{\vspace{1mm} \noindent \textbf{#1}.}
\let\oldnl\nl
\newcommand{\nonl}{\renewcommand{\nl}{\let\nl\oldnl}}
\begin{document}

\title{K-Regret Minimization Acceleration: A Sampling-based Approach}
\title{Efficient Processing of $k$-regret Minimization Queries with Theoretical Guarantees}

\author{
{Jiping Zheng{\small$^{\#}$}, Qi Dong{\small $^{\#}$}, Xiaoyang Wang{\small $^{\dag}$}, Ying Zhang{\small $^{\ddag}$}, Wei Ma{\small $^{\#}$}, Yuan Ma{\small $^{\#}$}}
\vspace{1.6mm}\\
\fontsize{10}{10}\selectfont\itshape
$~^{\#}$College of Computer Science and Technology, Nanjing University of Aeronautics and Astronautics, Nanjing, China\\
\fontsize{9}{9}\selectfont\ttfamily\upshape

\fontsize{10}{10}\selectfont\rmfamily\itshape
$~^{\dag}$School of Computer Science and Information Engineering, Zhejiang Gongshang University, Hangzhou, China\\
$~^{\ddag}$School of Computer Science, University of Technology Sydney, Sydney, Australia\\

Email: \{jzh, dongqi, mawei, mayuancs\}@nuaa.edu.cn, xiaoyangw@zjgsu.edu.cn, ying.zhang@uts.edu.au}

\maketitle

\begin{abstract}
	Assisting end users to identify desired results from a large dataset is an important problem for multi-criteria decision making. To address this problem, top-$k$ and skyline queries have been widely adopted, but they both have inherent drawbacks, \ie the user either has to provide a specific utility function or faces many results. The $k$-regret minimization query is proposed, which integrates the merits of top-$k$ and skyline queries. Due to the NP-hardness of the problem, the $k$-regret minimization query is time consuming and the greedy framework is widely adopted. However, formal theoretical analysis of the greedy approaches for the quality of the returned results is still lacking. In this paper, we first fill this gap by conducting a nontrivial theoretical analysis of the approximation ratio of the returned results. To speed up query processing, a sampling-based method, \AlgSPG, is developed to reduce the evaluation cost. In addition, a theoretical analysis of the required sample size is conducted to bound the quality of the returned results. Finally, comprehensive experiments are conducted on both real and synthetic datasets to demonstrate the efficiency and effectiveness of the proposed methods.
\end{abstract}

\section{Introduction}

For end users, identifying the most desired data points from a large dataset to support multi-criteria decision making is an essential functionality in many domains.
For instance, when online shopping, it is difficult for a user to view all products. A possible solution is to show the user some representative products based
on certain criteria, \eg the user's preferences. In the literature,
top-$k$ queries \cite{Ilyas:2008} and skyline queries \cite{Borzsony:2001} are two well-studied and commonly used tools that can effectively reduce the output size.
However, both queries suffer from some drawbacks. In the top-$k$ query, we assume that the user can provide a utility function (\ie user's preference function) in advance. Then, it can output $k$ data points with the largest scores. In real applications, users usually do not have a specific function in mind.
A skyline query does not ask the user for any utility function.
Instead, it retrieves all the data points that are not dominated by others in the dataset.
For instance, Table~\ref{tab:nbaexample} shows a skyline query toy example. It consists of the 16 skyline NBA players found from the 2009 regular season by considering 3 player attributes, \ie points, rebounds and steals\footnote{\url{https://www.rotowire.com/basketball/}}. These players do not \emph{dominate} each other, such as Kevin Durant and LeBron James, where
Kevin Durant is better in points and rebounds than LeBron James but not in steals.
However, in real scenarios, the exact number of skyline results is generally large and uncontrollable and cannot be foreseen before the whole dataset is accessed. In addition, the output size of skyline queries usually increases rapidly with the dimensionality. Consequently, the user will still face many choices.

\begin{table}[htb]
	\centering
	\caption{Skyline NBA players from the 2009 regular season.}
	\label{NBA}
	\footnotesize
	\renewcommand{\arraystretch}{1.2}	
	\begin{tabular}{|c|l|c|c|c|}
		\hline
		ID & Player Name & Points & Rebounds & Steals \\
		\hline\hline
		1 &  Kevin	Durant & 2,472 & 623 & 112 \\
		2 &  LeBron	James & 2,258 & 554 & 125 \\
		3 &  Dwyane	Wade & 2,045 & 373 & 142 \\
		4 &  Amare	Stoudemire & 1,896 & 732 & 52 \\
		5 &  Zach	Randolph & 1,681 & 950 & 80 \\
		6 &  Stephen	Jackson & 1,667 & 401 & 132 \\
		7 &  David	Lee & 1,640 & 949 & 85 \\
		8 &  Monta	Ellis & 1,631 & 257 & 143 \\
		9 &  Dwight	Howard & 1,503 & 1,082 & 75 \\
		10 & Andre	Iguodala & 1,401 & 529 & 141 \\
		11 & Stephen	Curry & 1,399 & 356 & 152 \\
		12 & Gerald	Wallace & 1,386 & 762 & 117 \\
		13 & Josh	Smith & 1,269 & 705 & 130 \\
		14 & Rajon	Rondo & 1,110 & 360 & 189 \\
		15 & Jason	Kidd & 824 & 445 & 145 \\	
		16 & Marcus	Camby & 556 & 871 & 95 \\
		\hline
	\end{tabular}
	\label{tab:nbaexample}
\end{table}

To address these problems, Nanongkai et al. \cite{Nanongkai:2010} introduced the $k$-regret minimization query. Given a positive integer $k$ and a dataset $D$, it returns a set $S$ of $k$ data points that can minimize the user's maximum regret ratio under a class of utility functions. 
The $k$-regret minimization query integrates the merits of top-$k$ and skyline queries. 
It does not ask the user for any utility function and outputs a result with controllable size, \ie only $k$ data points.
Due to the NP-hardness of the problem, in \cite{Nanongkai:2010}, \AlgRDP was proposed based on the Ramer-Douglas-Peucker greedy framework. 
The greedy strategy has also been used in most follow-up studies, \eg~\cite{Peng:2014,Faulkner:2015,Xie:2018,Qi:2018,Qiu:2018,Dong:2019}.

To answer the $k$-regret query, we need to exploit the linear utility function space to simulate all possible utility functions that users may be interested in~\cite{Xie:2019VLDBJ}. However, it has a side effect of performing many function evaluations via linear programming (LP). Generally, the number of function evaluations conducted by \AlgRDP is $nk$, where $n$ is the cardinality of the whole dataset.
In real applications, $k$ is usually much smaller than the size of the skyline set. Thus, the skyline points are considered candidate points, whose setting has also been widely adopted by follow-up studies.
Even if $n$ is the cardinality of the skyline, the cost of LPs still significantly affects the efficiency of the $k$-regret minimization query.
Therefore, due to the time complexity, \AlgRDP cannot scale well for larger $n$ and $k$.

Given the importance of the $k$-regret query, the following studies attempt to improve the performance from different perspectives. In~\cite{Peng:2014}, the authors attempted to address the scalability issues by reducing the candidate size, but the time complexity of decreasing the candidate set is still unacceptable when $n$ is large. In \cite{Cao:2017,Agarwal:2017,Asudeh:2017}, the proposed approaches, such as $\epsilon$-kernel, hitting set and \AlgHDG, output the results of variable sizes. Some studies attempted to accelerate the processing by sacrificing the quality of returned results or compromising between the size of the returned set and the maximum regret ratio bounds, \eg~\cite{Asudeh:2017,Agarwal:2017,Cao:2017,Xie:2018}.
However, except for \cite{Xie:2018}, which partially adopts the greedy framework, none of them has attempted to speed up or optimize the greedy framework in \AlgRDP~\cite{Nanongkai:2010}. 
Moreover, there is still no theoretical analysis of the greedy framework, which is widely adopted by the existing research.

In this paper, the main objective is to speed up existing greedy algorithms with theoretical guarantees.
The developed techniques and theorems can also be used by the existing research if a similar greedy framework is involved.
To simplify the theoretical analysis, we adopt the concept of the \emph{happiness ratio} \cite{Xie:2020}, which captures how happy a user can be after seeing $k$ representative data points instead of the whole dataset. We show that minimizing the maximum regret ratio is equal to maximizing the user's minimum happiness ratio. The minimum happiness ratio function is a monotone set function, which is composed of a class of submodular functions corresponding to a class of utility functions of the regret minimization query. We know that the function for computing the maximum of a class of submodular functions is generally not submodular, which means that the minimum happiness ratio function is also not submodular. Therefore, executing the $k$-regret minimization query can be regarded as a monotone nonsubmodular maximization problem with cardinality constraint (\ie return $k$ data points).
By adopting the happiness ratio in \AlgRDP, we introduce the \AlgPG algorithm and conduct a rigorous theoretical analysis of the approximation ratio. Moreover, to further speed up the computation, we develop a sampling-based approach, \AlgSPG, to achieve a good tradeoff between query processing time and result quality, also with theoretical guarantees.

\myparagraph{Contributions} Our principal contributions are summarized as follows.
\begin{itemize}
	\item In this paper, 
	we conduct the first work to provide rigorous theoretical analysis about the approximation ratio for \AlgRDP.
	\item To reduce the number of evaluations to improve the efficiency, a sampling-based algorithm, \AlgSPG, is proposed.  We also provide theoretical guarantees about the sample size required to bound the quality of returned results.
	\item Finally, through comprehensive experiments over synthetic and real-world datasets, we demonstrate the efficiency and effectiveness of the proposed methods.
\end{itemize}

\myparagraph{Road map} The rest of the paper is organized as follows. We briefly introduce the problem to be studied and the related techniques in Section~\ref{sec:background}. In Section~\ref{sec:sampling}, we provide the theoretical analysis as well as the sampling-based method. We demonstrate the efficiency and effectiveness of the proposed framework in Section~\ref{sec:exp} on both synthetic and real-world datasets and introduce the related works in Section~\ref{sec:rel}. Finally, we conclude the paper in Section~\ref{sec:conc}.

\section{Preliminaries}
\label{sec:background}

In this section, we first formally introduce the
problem studied in this paper, as well as the properties of the objective function. Then
we provide some related concepts employed in the following sections.
Table~\ref{notation} summarizes the notations frequently used throughout
the paper.

\begin{table}[tb]
	\centering
	\caption{Notations}
	\label{notation}
	\footnotesize
	\begin{tabular}{|c | l |}\hline
		Symbol      &Description        \\ \hline\hline
		$D, S$      &the whole dataset, a subset of $D$      \\ \hline
		$k, n, d$  & a positive integer, cardinality of $D$, dimensionality of $D$  \\ \hline
		$u, f$  & utility function   \\ \hline
		$\mathcal{U}$  &a class of utility functions   \\ \hline
		$p, u(p)$  &a data point in $D$, the utility of $p$   \\ \hline
		$H_D(S, u)$  & happiness ratio of $S$ under a specific utility $u$\\ \hline
		$H_D (S,\mathcal{U})$  & minimum happiness ratio of $S$ \\ \hline
		$\Delta_{S}(p)$  & marginal gain of point $p$ respect to $S$ \\ \hline
		$S_i$  & the successive set in step $i$ of a greedy algorithm \\ \hline
		$q_i$  & the point selected into the solution set $S_{i-1}$ in step $i$ \\ \hline
		$\Delta_i$  & marginal gain in step $i$ of a greedy algorithm \\ \hline
		$\gamma, \gamma^G$  & submodularity ratio, greedy submodularity ratio \\ \hline
		$\alpha, \alpha^G$  & generalized curvature, greedy curvature \\ \hline
		$S^*$  & the optimal solution \\ \hline
		$o_i$  & the element in $S_k\cap S^*$ \\ \hline
		$I(\cdot), x_i$  & approximation ratio, the $i$th component of $I(\cdot)$ \\ \hline
		$\textbf{y}, y_i$  & the optimal solution of $I(\emptyset)$, the $i$th component of $\textbf{y}$ \\ \hline
		$\epsilon, \lambda, s$  & approximation accuracy, sample factor, sample size \\ \hline
	\end{tabular}
\end{table}

\subsection{K-Regret Minimization}

Let $D$ be a set of $n$ $d$-dimensional points with positive real values. For each point $p\in D$, the value on the $i$th dimension is represented as $p[i]$. Before we introduce the problem,
we first present some related concepts \cite{Nanongkai:2010}.

\myparagraph{Utility function} A utility function $u$ is a mapping $u$: $\mathbb{R}_+^d \rightarrow \mathbb{R}_+$.
Given a utility function $u$, the utility of a data point $p$ is denoted as $u(p)$, which
shows how satisfied the user is with the data point.

\myparagraph{Regret ratio} Given a dataset $D$, a subset $S$ of $D$ and a utility function $u$, the regret ratio of $S$, represented as $R_D(S, u)$, is defined as
\[
R_D(S,u)=1-\frac{\max_{p\in S} u(p)}{\max_{p\in D}u(p)}
\]	

Since $S$ is a subset of $D$, given a utility function $u$, it is obvious that $\max_{p\in S}u(p)\leq\max_{p\in D}u(p)$ and the regret ratio ranges from 0 to 1. The user with utility function $u$ will be happy if the regret ratio approaches 0 because the maximum utility of $S$ is close to the maximum utility of $D$.

\myparagraph{Linear utility function} Assume there are some nonnegative real values $\{v[1], v[2], \cdots, v[d]\}$, where $v[i]$ denotes the user's preference for the $i$th dimension. Then, a linear utility function can be represented by these nonnegative reals and $u(p)=\sum_{i=1}^d v[i]\cdot p[i]$. A linear utility function can also be expressed by a vector, \ie $v=<v[1], v[2], ..., v[d]>$, so the utility of point $p$ can be expressed by the dot product of $v$ and $p$, \ie $u(p)=v\cdot p$.


\myparagraph{Maximum regret ratio \cite{Nanongkai:2010}}
Given a dataset $D$, a subset $S$ of $D$ and a class of utility functions $\mathcal{U}$. The maximum regret ratio of $S$, represented as $R_D (S, \mathcal{U})$, is defined as 	
\[
R_D (S,\mathcal{U})=\sup_{u\in \mathcal{U}}R_D(S,u)=\sup_{u\in \mathcal{U}}\left(1-\frac{\max_{p\in S} u(p)}{\max_{p\in D} u(p)}\right)
\]

To calculate $R_D (S,\mathcal{U})$, we call an LP solver such as the \emph{Simplex} method 
or the \emph{Interior Point} method 
\cite{Bertsimas:1997}. To simplify the theoretical analysis, we introduce the concepts of \textbf{happiness ratio} $H_D(S,u)$ and \textbf{minimum happiness ratio} $H_D (S,\mathcal{U})$, where
\[
H_D(S,u)=1-R_D (S,u)
\]
\[
H_D (S,\mathcal{U})=1-R_D (S,\mathcal{U}) = \inf_{u\in \mathcal{U}}\frac{\max_{p\in S} u(p)}{\max_{p\in D} u(p)}
\]
\begin{table*}
	\centering
	\setlength{\tabcolsep}{0.5mm}
	\footnotesize
	\renewcommand{\arraystretch}{1.2}
	\caption{Example of our problem. We show utilities and happiness ratios of first 9 players in Table \ref{NBA} with the corresponding value normalized.}
	\label{NBA_hr}
	\begin{tabular}{|c|l|ccc||cc|cc|cc|cc|c|}
		\hline
		\multirow{2}{*}{id} & \multirow{2}{*}{player name} & \multicolumn{3}{c||}{normalized points} & \multicolumn{2}{c|}{$v_0=<.9,.05,.05>$} & \multicolumn{2}{c|}{$v_{1}=<.05,.9,.05>$} & \multicolumn{2}{c|}{$v_{2}<.05,.05,.9>$} & \multicolumn{2}{c|}{$v_{3}=<.33,.33,.34>$}  &   \multirow{2}{*}{$\min$ h.r.} \\ \cline{3-13}
		& & $x$ & $y$ & $z$ & utility & h.r. & utility & h.r. & utility & h.r. & utility & h.r. & \\
		\hline\hline
		1 &  Kevin	Durant & 1.00 & 0.58 & 0.59 & 0.96 & $\frac{0.96}{0.96}=1.00$ & 0.60 & 0.63 & 0.61 & 0.84 & 0.72 & 1.00 & 0.63\\
		2 &  LeBron	James & 0.91 & 0.51 & 0.66 & 0.88 & $\frac{0.88}{0.96}=0.92$ & 0.54 & 0.57 & 0.67 & 0.92 & 0.69 & 0.96 & 0.57 \\
		3 &  Dwyane	Wade & 0.83 & 0.34 & 0.75 & 0.80 & $\frac{0.80}{0.96}=0.83$ & 0.39 & 0.41 & 0.73 & 1.00 & 0.64 & 0.89 & 0.41\\
		4 &  Amare	Stoudemire & 0.77 & 0.68 & 0.28 & 0.74 & $\frac{0.74}{0.96}=0.77$ & 0.66 & 0.69 & 0.32 & 0.44 & 0.57 & 0.79 & 0.44 \\
		5 &  Zach	Randolph & 0.68 & 0.88 & 0.42 & 0.68 & $\frac{0.68}{0.96}=0.71$ & 0.85 & 0.89 & 0.46 & 0.63 & 0.66 & 0.92 & 0.63\\
		6 &  Stephen  Jackson & 0.67 & 0.37 & 0.70 & 0.66 & $\frac{0.66}{0.96}=0.69$ & 0.40 & 0.42 & 0.68 & 0.93 & 0.58 & 0.81 & 0.42\\
		7 &  David	Lee & 0.66 &0.88 & 0.45 & 0.66 & $\frac{0.66}{0.96}=0.69$ & 0.85 & 0.89 & 0.48 & 0.66 & 0.66 & 0.92 & 0.66\\
		8 &  Monta	Ellis & 0.66 &0.24 &0.76 & 0.64 & $\frac{0.64}{0.96}=0.67$ & 0.29 & 0.31 & 0.73 & 1.00 & 0.56 & 0.78 & 0.31\\
		9 &  Dwight	Howard & 0.61 &1.00 & 0.40 & 0.62 & $\frac{0.62}{0.96}=0.66$ & 0.95 & 1.00 & 0.44 & 0.60 & 0.67 & 0.93 & 0.60\\
		\hline
	\end{tabular}
\end{table*}

\begin{example}
	Table \ref{NBA_hr} shows utilities, happiness ratios and minimum happiness ratios of the first 9 skyline NBA players in Table~\ref{NBA} along with 4 linear utility functions
	$v_0=\left< 0.9,0.05,0.05 \right>$, $v_{1}=\left<0.05,0.9,0.05 \right>$, $v_{2}=\left<0.05,0.05,0.9\right>$ and $v_{3}=\left<0.33,0.33,0.34\right>$. For Durant, his corresponding utilities are 0.96, 0.60, 0.61 and 0.72 with happiness ratios of 1, 0.63, 0.84 and 1. The minimum happiness ratio of \{Durant\} is 0.63.
\end{example}

\myparagraph{Problem Definition} Given a dataset $D$, a positive integer $k$ and a class of linear utility functions $\mathcal{U}$, we attempt to efficiently answer the $k$-regret minimization query, which finds a subset $S^*$ of $D$ containing at most $k$ points such that the minimum happiness ratio is maximized, \ie
\[
S^* = \argmax_{S\subseteq D \wedge |S|\leq k} H_D (S,\mathcal{U})
\]

\subsection{Function Properties}

We first show that our happiness ratio function under a specified utility function is a monotone nondecreasing submodular function. 
Given a function $f:2^{D}\rightarrow \mathbb{R}$, $S\subseteq D$ and $p\in D\setminus S$, let $\Delta_{S}(p)\doteq f(\{p\}\cup S)-f(S)$ be the marginal gain of adding data point $p$ to the set $S$.
The function $f$ is monotonically nondecreasing and submodular if $i)$ $\Delta_{S}(p)\geq 0$ and $ii)$ for $S_{1}\subseteq S_{2}$ and $p \notin S_2$, $\Delta_{S_{1}}(p)\geq \Delta_{S_{2}}(p)$~\cite{Fujishige:2005}. An equivalent form of a submodular function is expressed by $f(S_{1})+f(S_{2})\geq f(S_{1}\cup S_{2})+f(S_{1}\cap S_{2})$ compared to a modular function $f$ which has the form $f(S_{1})+f(S_{2})= f(S_{1}\cup S_{2})+f(S_{1}\cap S_{2})$.

We observe that for a specific utility function $u$, our minimum happiness ratio function $H_D(S,u)$ is a monotonic nondecreasing submodular function. The monotonicity is obvious for $S_1\subseteq S_2$. The observation of submodularity is shown as follows. Let $\triangle'=H_D(S_1\cup \{q\},u)-H_D(S_1,u)$ and $\triangle''=H_D(S_2\cup \{q\},u)-H_D(S_2,u)$. We can see that the maximum value of $u(p)$ may be achieved in one of the three cases: $p\in S_1$, $p\in S_2 \backslash S_1$ or $p\in D\backslash S_2$. In all the three cases, $\triangle' \geq \triangle''$. Thus, $H_D(S,u)$ is submodular for a given utility function $u$. 

Unfortunately, submodularity does not hold for $H_{D}(S,\mathcal{U})$, where $\mathcal{U}$ represents a class of utility functions. We show that the claim is correct by constructing a counterexample. According to~\cite{Fujishige:2005}, a function $f$ is submodular if it satisfies $f(S_{1})+f(S_{2})\geq f(S_{1}\cup S_{2})+f(S_{1}\cap S_{2})$ for $\forall S_{1},S_{2}\subseteq D$. Given a dataset $D=\{(0,1),(1,0)\}$ and a function class $\mathcal{U}$ with two linear functions $\left< 1,0 \right>, \left<0,1 \right>$, let $S_1=\{(0,1)\}$ and $S_2=\{(1,0)\}$. Based on the definition, we have $H_D(D,\mathcal{U})=1$ and $H_D(S_1,\mathcal{U})=H_D(S_2,\mathcal{U})=H_D(\emptyset,\mathcal{U})=0$. It means $1= H_D(S_{1}\cup S_{2},\mathcal{U})+H_D(S_{1}\cap S_{2},\mathcal{U})\nleq H_D(S_{1},\mathcal{U})+H_D(S_{2},\mathcal{U}) = 0$. Therefore, the function $H_{D}(S,\mathcal{U})$ is not submodular.

Monotone nondecreasing submodular functions have been shown with an $1-1/e$ approximation ratio and can be well approximated, \ie near optimal \cite{Nemhauser:1978}.
Because $H_D(S,u)$ is submodular while $H_{D}(S,\mathcal{U})$ is not, we introduce the submodularity ratio and generalized curvature to model $H_{D}(S,\mathcal{U})$ to obtain the guarantees of the greedy framework for regret minimization (or happiness maximization) queries \cite{Peng:2014,Faulkner:2015,Xie:2018,Qi:2018,Qiu:2018,Dong:2019}.

\begin{lemma}
	\label{le:submo}
	For a specific utility function $u$, $H_D(S,u)$ is a monotone nondecreasing submodular function.
\end{lemma}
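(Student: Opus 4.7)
The plan is to separate out the constant denominator and reduce the claim to showing that $g(S) := \max_{p\in S} u(p)$ is a monotone nondecreasing submodular set function on $2^D$. Since $\max_{p\in D} u(p)$ depends only on the ground set and the utility (and is strictly positive whenever $u$ is nonzero), dividing $g$ by it preserves both properties. I adopt the convention $g(\emptyset)=0$, which is consistent with the nonnegativity of $u$.

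Monotonicity is immediate: for $S_1 \subseteq S_2$, the maximum is taken over a larger collection in $S_2$, so $g(S_1)\leq g(S_2)$ and therefore $H_D(S_1,u)\leq H_D(S_2,u)$.

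For submodularity, I verify the diminishing-returns form directly. A one-line computation gives, for any $S\subseteq D$ and $q\notin S$,
\[
\Delta_S(q) \;=\; \max\{g(S),\,u(q)\} - g(S) \;=\; \max\{0,\,u(q)-g(S)\}.
\]
The map $S \mapsto \max\{0,\,u(q)-g(S)\}$ is nonincreasing under set inclusion because $g(\cdot)$ is nondecreasing by the previous step. Hence $\Delta_{S_1}(q)\geq \Delta_{S_2}(q)$ whenever $S_1\subseteq S_2$ and $q\notin S_2$, which is exactly the submodularity condition used in the paper. Dividing by the constant $\max_{p\in D} u(p)$ transfers the inequality to $H_D(\cdot,u)$. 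If one prefers to follow the excerpt's outline, the same conclusion comes from a three-case split on where the maximizer of $u$ over $S_2\cup\{q\}$ lies (in $S_1$, in $S_2\setminus S_1$, or $q$ itself); in each case the displayed simplification of $\Delta_S(q)$ reappears, with equality in some cases and strict inequality in others.

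I do not anticipate any real obstacle here: the result is essentially the classical observation that a ``best-utility-seen'' function is submodular, which in turn underlies the well-known submodularity of weighted coverage. The only care needed is the edge case $S=\emptyset$ and the clean separation of the denominator from the set-dependent numerator, both handled above.
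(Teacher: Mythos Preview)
Your proof is correct and follows essentially the same approach as the paper: both establish monotonicity directly from the max over a larger set, and both verify diminishing returns by analyzing the marginal gain of adding a point $q$, with the paper doing the three-case split on the location of the maximizer that you mention at the end. Your explicit formula $\Delta_S(q)=\max\{0,\,u(q)-g(S)\}$ is a mild streamlining of that case analysis, but the underlying argument is the same.
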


\begin{proof}
	\textit{Monotone nondecreasing:} Given $S_{1}$, $S_{2}$ where $S_{1}\subseteq S_{2}\subseteq D$ and a specific utility function $u$, the happiness ratio of $S_1$ (\ie $H_D(S_1,u)$) is not larger than that of $S_2$ (\ie $H_D(S_2,u)$) since $S_1$ is a subset of $S_2$; that is,
	
	\[
	H_D(S_1,u)=\frac{\max_{p\in S_1} u(p)}{\max_{p\in D}u(p)}\leq \frac{\max_{p\in S_2} u(p)}{\max_{p\in D}u(p)}=H_D(S_2,u)
	\]
	
	\noindent \textit{Submodularity:} Given a data point $q\in {D\backslash S_{2}}$, according to the definition of $H_D(S,u)$, we have
	\begin{equation}
	\begin{aligned}
	\triangle'=&H_D(S_1\cup \{q\},u)-H_D(S_1,u) \\
	=&\frac{\max_{p\in S_1\cup \{q\}}u(p)-\max_{p\in S_1}u(p)}{\max_{p\in D}u(p)} \\
	\end{aligned}
	\end{equation}
	\begin{equation}
	\begin{aligned}
	\triangle''=&H_D(S_2\cup \{q\},u)-H_D(S_2,u) \\
	=&\frac{\max_{p\in S_2\cup \{q\}}u(p)-\max_{p\in S_2}u(p)}{\max_{p\in D}u(p)}
	\end{aligned}
	\end{equation}
	
	For convenience, we denote the maximum utility of data points in $S_2 \cup \{q\}$ as \texttt{Maxu}, \ie \texttt{Maxu}$=\max_{p\in S_2\cup \{q\}}u(p)$ and divide the discussion into 3 cases.
	\begin{itemize}
		\item Case 1: If \texttt{Maxu} is achieved by the point in $S_1$; then we have $\triangle'=\triangle'' = 0$.
		\item Case 2: If \texttt{Maxu} is achieved by the point in $S_{2}\backslash S_{1}$,  we have $\triangle' \geq 0$ since the function is monotone and $\triangle'' = 0$. Therefore, $\triangle' \geq \triangle''$.
		\item Case 3: If \texttt{Maxu} is achieved by the point $q$; we have $H_D(S_1\cup \{q\},u)=H_D(S_2\cup \{q\},u)$. Since $S_{1}\subseteq S_{2}$, we have $H_D(S_1,u)\leq H_D(S_2,u)$. Therefore, $\triangle' \geq \triangle''$.
	\end{itemize}

	For all the cases, we have $\triangle' \geq \triangle''$; therefore the function is submodular. Lemma~\ref{le:submo} is correct.	
\end{proof}

Based on Lemma~\ref{le:submo}, $H_D(S,u)$ is submodular for a given utility function $u$. However, according to Lemma~\ref{le:submo_n}, the property does not hold for $H_{D}(S,\mathcal{U})$, where $\mathcal{U}$ represents a class of utility functions.

\begin{lemma}
	\label{le:submo_n}
	The minimum happiness ratio function $H_{D}(S,\mathcal{U})$ is not submodular.
\end{lemma}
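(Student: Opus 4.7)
The plan is to disprove submodularity by a direct counterexample using the equivalent characterization already cited in the discussion: a function $f\colon 2^D\to \mathbb{R}$ is submodular iff $f(S_1)+f(S_2)\geq f(S_1\cup S_2)+f(S_1\cap S_2)$ for all $S_1,S_2\subseteq D$. So it is enough to exhibit one dataset $D$, one utility class $\mathcal{U}$, and two subsets $S_1,S_2$ that violate this inequality for $H_D(\cdot,\mathcal{U})$.

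First I would pick the smallest instance that can possibly separate the utilities: take $d=2$, $D=\{(0,1),(1,0)\}$, and $\mathcal{U}=\{\langle 1,0\rangle,\langle 0,1\rangle\}$. The design idea is that each utility in $\mathcal{U}$ is maximized on exactly one of the two points, while being zero on the other. Then for $S_1=\{(0,1)\}$ and $S_2=\{(1,0)\}$, one utility in $\mathcal{U}$ sends $S_1$ to $0$ and the other sends $S_2$ to $0$, so the infimum defining $H_D(\cdot,\mathcal{U})$ collapses to $0$ on each singleton.

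Next I would carry out the four evaluations directly from the definition $H_D(S,\mathcal{U})=\inf_{u\in\mathcal{U}}\frac{\max_{p\in S}u(p)}{\max_{p\in D}u(p)}$. With the convention that $\max$ over the empty set is $0$, this yields $H_D(\emptyset,\mathcal{U})=0$, $H_D(S_1,\mathcal{U})=H_D(S_2,\mathcal{U})=0$ (the zero is forced by the orthogonal utility in each case), and $H_D(S_1\cup S_2,\mathcal{U})=H_D(D,\mathcal{U})=1$. Plugging into the submodularity inequality gives $0=H_D(S_1,\mathcal{U})+H_D(S_2,\mathcal{U}) < H_D(S_1\cup S_2,\mathcal{U})+H_D(S_1\cap S_2,\mathcal{U})=1$, a contradiction, so $H_D(\cdot,\mathcal{U})$ cannot be submodular.

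There is no genuinely hard step; the only delicate point is being explicit that the infimum over a family of submodular functions need not be submodular, which is exactly what the construction exploits — disjoint points being ``private optima'' for different utilities forces large super-additive jumps at the union while keeping both singletons at the floor. I would state this interpretation briefly after the calculation, since it motivates why the authors must turn to the submodularity ratio and generalized curvature machinery in the sequel rather than relying on the classical $1-1/e$ guarantee.
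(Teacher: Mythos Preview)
Your proof is correct and essentially identical to the paper's own argument: the paper also uses the characterization $f(S_1)+f(S_2)\geq f(S_1\cup S_2)+f(S_1\cap S_2)$ and the very same counterexample $D=\{(0,1),(1,0)\}$, $\mathcal{U}=\{\langle 1,0\rangle,\langle 0,1\rangle\}$, $S_1=\{(0,1)\}$, $S_2=\{(1,0)\}$, arriving at $0<1$ to violate the inequality.
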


\begin{proof}
	We prove Lemma~\ref{le:submo_n} by constructing a counterexample. According to~\cite{Fujishige:2005}, a function $f$ is submodular if it satisfies $f(S_{1})+f(S_{2})\geq f(S_{1}\cup S_{2})+f(S_{1}\cap S_{2})$ for $\forall S_{1},S_{2}\subseteq D$.
	Given a dataset $D=\{(0,1),(1,0)\}$ and a function class $\mathcal{U}$ with two linear functions $\left< 1,0 \right>, \left<0,1 \right>$, let $S_1=\{(0,1)\}$ and $S_2=\{(1,0)\}$. Based on the definition, we have
	$H_D(D,\mathcal{U})=1$ and $H_D(S_1,\mathcal{U})=H_D(S_2,\mathcal{U})=H_D(\emptyset,\mathcal{U})=0$.
	Thus, $1= H_D(S_{1}\cup S_{2},\mathcal{U})+H_D(S_{1}\cap S_{2},\mathcal{U})\nleq H_D(S_{1},\mathcal{U})+H_D(S_{2},\mathcal{U}) = 0$. Therefore, the function is not submodular and Lemma~\ref{le:submo_n} is correct.
\end{proof}

\subsection{Submodularity Ratio and Generalized Curvature}

The submodularity ratio is a quantity between 0 and 1 to characterize how close a set function is to be submodular, while curvature quantifies how close a submodular function is to be modular. Denote $\Delta_{S}(T)=H_{D}(S\cup T, \mathcal{U})-H_{D}(S, \mathcal{U})$. Let $S_0=\emptyset, S_i=\{q_1,q_2,...,q_i\}, i\in[1,k]$ be the successive sets chosen by a greedy algorithm, and $q_i$ denotes the data point selected in step $i$. 

\myparagraphnew{Submodularity ratio}~\cite{Das:2011}. The submodularity ratio of a nonnegative set function $H(\cdot)$ is the largest scalar $\gamma$, \emph{s.t.}
\begin{align}
\label{eq:subratio}
\sum_{p\in T\backslash S}\Delta_{S}(p)\geq \gamma \Delta_{S}(T), \forall T,S\subseteq D.
\end{align}
We can see that the computation of  $\gamma$ is too time consuming. In practice, we only consider the greedy version of Equation \ref{eq:subratio}. The greedy submodularity ratio in \cite{Bian:2017} is the largest scalar $\gamma^G$, \emph{s.t.}
\begin{align*}
\sum_{p\in T\backslash S_i}\Delta_{S_i}(p)\geq \gamma^G\Delta_{S_i}(T), \forall |T|=k, i=0,...,k-1
\end{align*}

\myparagraphnew{Generalized curvature}~\cite{Bian:2017}. The curvature of a nonnegative set function $H(\cdot)$ is the smallest scalar $\alpha$, \emph{s.t.}
\begin{align*}
\Delta_{S\backslash \{p\}\cup T}(p)\geq (1-\alpha)\Delta_{S\backslash \{p\}}(p), \forall T,S\subseteq D,p\in S\backslash T.
\end{align*}
The greedy version of the generalized curvature is the smallest scalar $\alpha^G$, \emph{s.t.}
\begin{align*}
\Delta_{S_{i-1}\cup T}(q_i)\geq (1-\alpha^G)\Delta_{S_{i-1}}(q_i), 
\forall |T|=k, q_i\in S_{k-1}\backslash T.
\end{align*}

%
%
%
%

\section{Accelerated Greedy Algorithms}
\label{sec:sampling}

In this section, we adopt the happiness ratio into the \AlgRDP greedy framework~\cite{Nanongkai:2010} and present some new properties. Then, we fill the gap by conducting strict theoretical analysis for the quality of returned results.
Finally, a sampling-based method is developed to further speed up the search.

\subsection{Greedy Algorithm with Preselection}
\label{sec:alg1}

The $k$-regret minimization problem is NP-hard, and a greedy strategy is often used~\cite{Krause:2014,Bian:2017}. 
After adopting the concept of the happiness ratio, we show that the problem equals finding $k$ data points that can maximize the minimum happiness ratio. 
The objective function $H_D(S,\mathcal{U})$ is monotonic but not submodular. 
The classical greedy algorithm \cite{Nemhauser:1978} can solve the problem, which is critical to efficiently identify the data point with the largest marginal gain.
We call the algorithm by directly using the classical greedy framework as \AlgNWF.
Based on Lemma~\ref{le:1stpoint}, it is time consuming for \AlgNWF to find even the first data point.

\begin{algorithm}[!htb]
	\footnotesize
	\KwIn{A set of $n$ $d$-dimensional points $D=\{p_1, p_2, \cdots, p_n\}$ and $k$.}
	\KwOut{A result set $S_{k}$.}
	\BlankLine
	Let $S_{1}=\{q_1\}$, where $q_1=\arg\max_{p\in D} p[1]$\;
	\For {($i=2$; $i\leq k$; $i++$)}
	{	
		Let $h^*=1$ and $q_i=null$\;
		\For {each $p_{j}\in {D\backslash S_{i-1}}$}
		{	
			calculate the value of $H_{S_{i-1}\cup \{p_j\}}(S_{i-1},\mathcal{U})$ using linear programming\;
			\If{$h^*>H_{S_{i-1}\cup \{p_j\}}(S_{i-1},\mathcal{U})$}
			{
				$h^*=H_{S_{i-1}\cup \{p_j\}}(S_{i-1},\mathcal{U})$\;
				$q_i=p_j$\;
			}	
		}
		\If{$q_i=null$}{return $S_{i-1}$\;}
		\Else{
			$S_{i}=S_{i-1}\cup \{q_i\}$\;
		}
	}			
	return $S_k$\;
	\caption{\AlgPG $(D, k)$}
	\label{Alg:PG}
\end{algorithm}

\begin{lemma}
	\label{le:1stpoint}
	The time complexity of determining the first data point in \em\AlgNWF\em is at least $O(n^2)$.
\end{lemma}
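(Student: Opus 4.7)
The plan is to obtain the bound by two elementary counting steps: first, that \AlgNWF must evaluate every one of the $n$ candidate points when picking $q_1$, and second, that each such evaluation must touch the whole dataset, because the minimum happiness ratio of a singleton depends on the global maximizers under every utility in $\mathcal{U}$. Multiplying the two linear factors yields the $\Omega(n^2)$ lower bound.

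Concretely, I would first observe that after initialization \AlgNWF has $S_0=\emptyset$, so the marginal gain of a candidate $p$ collapses to $\Delta_{\emptyset}(p)=H_D(\{p\},\mathcal{U})-H_D(\emptyset,\mathcal{U})=H_D(\{p\},\mathcal{U})$. Because the classical greedy loop has no preselection heuristic available (unlike \AlgPG, which restricts the LP to $S_{i-1}\cup\{p_j\}$), every one of the $n$ points of $D$ is a priori a possible argmax and must be tried. This already contributes a factor of $n$.

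Next I would show that each single evaluation of $H_D(\{p\},\mathcal{U})$ is itself $\Omega(n)$. Unrolling the definition,
\[
H_D(\{p\},\mathcal{U}) \;=\; \inf_{u\in\mathcal{U}}\frac{u(p)}{\max_{q\in D}u(q)},
\]
so the denominator $\max_{q\in D}u(q)$ depends explicitly on all $n$ points of $D$. Any linear program that computes this infimum must encode the constraint $u(q^{\ast})\ge u(q)$ for every $q\in D$ for whichever $q^{\ast}$ realizes the maximum; equivalently it must read the coordinates of all $n$ points. Hence each LP call needs at least $\Omega(n)$ time just to build its constraint set, independently of whatever LP solver is plugged in. Combining with the $n$ candidates, the overall cost of selecting $q_1$ is $\Omega(n^2)$.

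The one point that needs care — and which I expect to be the main obstacle — is justifying the per-candidate $\Omega(n)$ lower bound without appealing to a particular LP solver's complexity. The cleanest way is the information-theoretic argument above: changing any single coordinate of any $q\in D$ can change the value of $\max_{q\in D} u(q)$ for some $u\in\mathcal{U}$, and hence can change $H_D(\{p\},\mathcal{U})$; therefore no correct algorithm for $H_D(\{p\},\mathcal{U})$ can read fewer than $\Omega(n)$ data entries. With that adversarial-input observation in place, the lemma follows immediately by multiplication.
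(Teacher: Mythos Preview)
Your argument is correct and shares the paper's overall structure: $n$ candidate singletons times an $\Omega(n)$ cost per evaluation of $H_D(\{p\},\mathcal{U})$, giving $\Omega(n^2)$ in total. Where you differ is in how you justify the per-evaluation $\Omega(n)$. The paper simply invokes the running time of LP solvers, noting that with infinitely many utilities one must call linear programming and that ``the time complexity of current state-of-the-art linear programming algorithms is at least $O(n)$'' (citing Simplex at $O(n^{2}d)$). You instead give an input-sensitivity argument: since the denominator $\max_{q\in D}u(q)$ ranges over all of $D$, any correct procedure must examine every point, so $\Omega(n)$ work is unavoidable regardless of which solver is used. Your route is more elementary and solver-agnostic; the paper's is terser but leans on an external complexity statement. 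Either way the conclusion is the same, and your explicit attention to the amortization concern (ruling out sub-linear per-call cost via the adversarial observation) is a point the paper leaves implicit.
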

\begin{proof}
	Based on the \AlgNWF framework, the first data point selected is $q_f=\argmax_{p\in D}H_D(\{p\},\mathcal{U})$, where $H_D(\{p\},\mathcal{U})$ $=\inf_{u\in \mathcal{U}}$ $\frac{u(p)}{\max_{p'\in D} u(p')}$. When the utility function class $\mathcal{U}$ contains only one element $u$, \ie $H_D(\{p\},\mathcal{U})=\frac{u(p)}{\max_{p'\in D} u(p')}$, then we only need to traverse all the data points in $D$ to compute the $H_D(\{p\},\mathcal{U})$.
	However, when facing a class of utility functions (\ie the infinite number of functions), we need to employ a linear programming method to compute the first data point. The time complexity of current state-of-the-art linear programming algorithms is at least $O(n)$ (\eg the time complexity of the Simplex method is $O(n^2d)$). Therefore, the time complexity of determining the first data point $q_f$ is at least $O(n^2)$.
\end{proof}

Based on the \AlgNWF framework, it is still costly to determine even the first data point. To reduce the cost, we adopt the same strategy as \AlgRDP does by preselecting some data points and then integrating the \AlgNWF process.
The details are shown in Algorithm~\ref{Alg:PG}.

As shown in~\cite{Nanongkai:2010}, the data points with the maximum value(s) in the coordinate(s) are more representative\footnote{In the following, we preselect only one data point in the algorithm, in which case the approximation ratio guarantee can be  maximized.}. Therefore, we preselect these data points for our \AlgPG algorithm. Then, our algorithm iteratively selects the point that contributes the most to the happiness ratio $H$ and adds it into the result set (Lines 2-12). The happiness ratio is computed by using the linear programming tools in Line 5.
 
\myparagraph{Computing $H_{S\cup\{p\}}(S, \mathcal{U})$ using an LP}  Given a set $S$ and a point $p$, we can compute $H_{S\cup\{p\}}(S, \mathcal{U})$ using the linear program LP~\ref{eq:lp}.

\begin{equation}
\label{eq:lp}
\begin{aligned}
\min & ~~~~y \qquad \quad  \qquad  \qquad  \qquad  & &  \\
\tiny{s.t.} &~~ \sum_{j=1}^{d}p'[j]v[j]\leq y~~~\forall p'\in S &\\
&~~\sum_{j=1}^{d}p[j]v[j]=1 & &\\
&~~v[j]\geq 0~~~~~~~~~~~~~~~~~\forall j\leq d &\\
&~~y\leq 1 &&
\end{aligned}
\end{equation}

\begin{example}
	We select $9$ points from Table \ref{tab:nbaexample} and use them to illustrate the process of the \AlgPG algorithm. We assume that the class of linear utility functions $\mathcal{U}$ contains 4 linear utility functions ($\ie$ $v_0=\left< 0.9,0.05,0.05 \right>$, $v_{1}=\left<0.05,0.9,0.05 \right>$, $v_{2}=\left<0.05,0.05,0.9\right>$ and $v_{3}=\left<0.33,0.33,0.34\right>$). The attribute values of each point are listed in Table \ref{NBA_hr}, and the corresponding utilities of these points are listed in Table \ref{presgreed}. We run the \AlgPG algorithm to select $k=3$ points, and each step is also listed in Table \ref{presgreed}. Let $H_{i,j}$ denote $H_{S_i\cup \{p_j\}}(S_i,\mathcal{U})$. First, we select the point whose first attribute value is maximal; thus, we select Kevin Durant to $S_1$. Next, for iteration 1, we need to calculate the value of $H_{1,j}$ of each point remaining. For Dwight Howard, $H_{1,9}=H_{S_1\cup \{p_9\}}(S_1,\mathcal{U})=\inf\{\frac{0.96}{\max\{0.96,0.62\}},\frac{0.6}{\max\{0.6,0.95\}},\frac{0.61}{\max\{0.61,0.44\}},\frac{0.72}{\max\{0.72,0.67\}}\}$ $=0.63$. Thus, we select the point Dwight Howard to $S_2$ whose $H_{1,j}$ is minimal. For iteration 2, we also use the same approach to select the point Dwyane Wade to $S_3$. After this iteration, $S_3=\{\text{Kevin Durant, Dwight Howard, Dwyane Wade}\}$ and the minimum happiness ratio of $S_3$ is $H_D(S_3,\mathcal{U})=\inf\{\frac{0.96}{0.96},\frac{0.95}{0.95},\frac{0.73}{0.73},\frac{0.72}{0.72}\}=1$.
\end{example}

\begin{table*}[!h]
	\centering
	\footnotesize
	\caption{Example of the \AlgPG algorithm. We show each step for selecting points by the \AlgPG algorithm}
	\label{presgreed}
	\begin{tabular}{|c|l|cccc||c|cc|cc|c|c|}
		\hline
		\multirow{3}{*}{id} & \multirow{3}{*}{player name} & \multicolumn{4}{c||}{\multirow{2}{*}{utility}} & Step 1 & \multicolumn{4}{c|}{Step 2 (line 2-12)}  &result   &\multirow{3}{*}{$\min$ h.r.} \\ \cline{8-11}
		&  & &    & &  &(line 1) & \multicolumn{2}{c|}{iteration 1} & \multicolumn{2}{c|}{iteration 2}  &(line 13) & \\ \cline{3-12}
		&  &$u_0$ &$u_1$ &$u_2$ &$u_3$ & $S_1$ &$H_{1,j}$  & $S_2$ &$H_{2,j}$  & $S_3$ &$S_3$ & \\
		\hline\hline
		1 &  Kevin	Durant  &0.96 &0.60 &0.61 &0.72  &   &  &  &  &   &  &\\
		2 &  LeBron	James  &0.88 &0.54 &0.67 &0.69  &   &0.91  &  &0.91   &  &  &\\
		3 &  Dwyane	Wade  &0.80 &0.39 &0.73 &0.64  &   &0.84  &  &\textbf{0.84}   &  &  &\\
		4 &  Amare	Stoudemire  &0.74 &0.66 &0.32 &0.57  &   &0.91  &  &1.00   &  &  &\\
		5 &  Zach	Randolph  &0.68 &0.85 &0.46 &0.66  & $\{1\}$   &0.71  &$\{1,9\}$  &1.00  &$\{1,9,3\}$   &$\{1,9,3\}$  &1.00 \\
		6 &  Stephen  Jackson  &0.66 &0.40 &0.68 &0.58  &   &0.90  & &0.90  &   &  &\\
		7 &  David	Lee  &0.66 &0.85 &0.48 &0.66  &   &0.71  &  &1.00  &    &  &\\
		8 &  Monta	Ellis  &0.64 &0.29 &0.73 &0.56  &   &0.84  &  &0.84  &    &  &\\
		9 &  Dwight	Howard  &0.62 &0.95 &0.44 &0.67  &  &\textbf{0.63}  &   &  &  &  &\\
		\hline
		
	\end{tabular}
\end{table*}


\begin{heuristic}
	\label{he:point}
	A point $p\in D\backslash S_{i-1}$ that satisfies $H_D(S_{i-1},\mathcal{U})=H_{S_{i-1}\cup \{p\}}(S_{i-1},\mathcal{U})$ is \textit{w.h.p.} the point contributing the most to $H_{D}(S_{i-1} \cup \{p\}, \mathcal{U})$ of the solution set $S_{i-1}$.
\end{heuristic}

We first show that if $H_D(S_{i-1},\mathcal{U})=H_{S_{i-1}\cup \{p\}}(S_{i-1},\mathcal{U})$ satisfies, the both sides achieve the same minimum happiness ratio value under the same utility function $u$.

The left side of the equality is expanded as follows.
\begin{equation}
	\begin{aligned}
		H_D(S_{i-1},\mathcal{U})=\inf\limits_{u\in \mathcal{U}}\frac{\max_{q\in S_{i-1}}u(q)}{\max_{q\in D}u(q)}=\frac{\max_{q\in S_{i-1}}u_1(q)}{\max_{q\in D}u_1(q)}
	\end{aligned}
\end{equation}
where $u_1$ is the utility function making $H_D(S_{i-1},\mathcal{U})$ achieve the minimal happiness ratio value. Similarly, we assume $u_2$ is the utility function when $H_{S_{i-1}\cup \{p\}}(S_{i-1},\mathcal{U})$ achieves the minimal value. 
\begin{equation}
	\begin{aligned}
		H_{S_{i-1}\cup \{p\}}(S_{i-1},\mathcal{U})&=\inf\limits_{u\in \mathcal{U}}\frac{\max_{q\in S_{i-1}}u(q)}{\max_{q\in S_{i-1}\cup \{p\}}u(q)}\\
		&=\frac{\max_{q\in S_{i-1}}u_2(q)}{\max_{q\in S_{i-1}\cup \{p\}}u_2(q)}\\
	\end{aligned}
\end{equation}

We claim that $u_1=u_2$ and the claim can be proved by contradiction. If $u_1\neq u_2$, we know that
\begin{equation}
	\begin{aligned}
&H_D(S_{i-1},\mathcal{U})=\inf\limits_{u\in \mathcal{U}}\frac{\max_{q\in S_{i-1}}u(q)}{\max_{q\in D}u(q)}\\
&< \frac{\max_{q\in S_{i-1}}u_2(q)}{\max_{q\in D}u_2(q)}< \frac{\max_{q\in S_{i-1}}u_2(q)}{\max_{q\in S_{i-1}\cup \{p\}}u_2(q)}\\
&\neq H_{S_{i-1}\cup \{p\}}(S_{i-1},\mathcal{U}) \quad\quad(\text{Contradict})
	\end{aligned}
\end{equation}

For $u_1=u_2=u$, the equality $\max_{q\in D}u(q)=\max_{q\in S_{i-1}\cup \{p\}}u(q)$ holds. For the happiness ratio function $H$, the aim is to find the inferior value of the fraction of two $\max$ functions and $H$ is also affected by the quality of the subset $S_{i-1}$. We can see that when $S_{i-1}$ is small or chosen badly enough, there will be several inferior values of the happiness function $H$ under different utility functions. Otherwise, if $H_D(S_{i-1},\mathcal{U})<1$, we have $\max_{q\in D}u(q)=u(p)$, \ie $p=\argmax_{q\in D}u(q)$. In practice, we greedily add points to $S_{i-1}$ to make the phenomenon that $H$ has multiple inferior values seldom happen thus Heuristic 1 stands.

Heuristic \ref{he:point} with high probability guarantees that if the equality $H_D(S_{i-1},\mathcal{U})=H_{S_{i-1}\cup \{p\}}(S_{i-1},\mathcal{U})$ is satisfied, we only need to select the point which contributes the most to $H_{D}(S_{i-1} \cup \{p\}, \mathcal{U})$.

\myparagraph{Analysis}
For the solution $S_{i-1}$ at step $i-1$ in Algorithm \ref{Alg:PG}, due to the properties of nonnegative and nondecreasing of the minimum happiness ratio function, the minimum value of the happiness ratio is no less than $0$, \ie $H_{D}(S_{i-1}, \mathcal{U})\geq 0$. Then, the calculation of $\Delta_{S_{i-1}}(p)$ can be simplified to calculate $H_{D}(S_{i-1} \cup \{p\}, \mathcal{U})$ because the value of $H_{D}(S_{i-1})$ remains unchanged. 

According to Heuristic~\ref{he:point}, a data point $p\in D\backslash S_{i-1}$ that contributes the most to the solution set $S_{i-1}$, \ie maximizing $H_D(S_{i-1}\cup \{p\}, \mathcal{U})$, is the point satisfying $H_D(S_{i-1},\mathcal{U})=H_{S_{i-1}\cup \{p\}}(S_{i-1},\mathcal{U})$. This can be done by computing $H_{S_{i-1}\cup \{p\}}(S_{i-1},\mathcal{U})$ for each point $p$ and keeping the point with the minimum value, where $H_{S_{i-1}\cup \{p\}}(S_{i-1},\mathcal{U})$ is computed using the linear program and $p\cdot v$ is normalized due to the scale-invariance property of the $k$-regret query. The time complexity of the linear program is  $O(k^{2}d)$~\cite{Bertsimas:1997} because $p'\in S$ instead of $p'\in D$. Otherwise, the time complexity will be $O(n^{2}d)$. Usually, $|S|<<|D|$; thus, the time complexity decreases by a great extent.

\subsection{Approximation Guarantee with Preselection}

In this section, we conduct a theoretical analysis of the approximation ratio of the returned result set in \AlgPG, which fills the gap in the existing research. Our main theoretical result shows that the result set $S_k$ of \AlgPG achieves a near-optimal solution for general monotone nonsubmodular functions with only one preselected data point. The details are shown in Theorem~\ref{thm:preselection}.
\begin{theorem} \label{thm:preselection}
	Let $S_{i}$ and $i\geq 0$ be the incremental successive solution sets and $\alpha$ and $\gamma$ are the curvature and submodularity ratio of the \em\AlgPG\em algorithm with a cardinality constraint $|S|\leq k$, respectively. Then, for all positive integers $k$, the approximation guarantee achieved by \em\AlgPG\em is $\frac{1}{\alpha}[1-(1-\frac{\alpha\gamma}{k})^{k-1}]$.
\end{theorem}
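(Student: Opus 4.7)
The plan is to adapt the Bian--Buhmann--Krause--Tschiatschek greedy analysis for monotone nonsubmodular maximization under a cardinality constraint to the preselected variant, where only $k-1$ iterations of the greedy rule are executed after the forced point $q_1=\arg\max_{p\in D}p[1]$. Throughout I abbreviate $H(\cdot):=H_D(\cdot,\mathcal{U})$ and $\Delta_{i+1}:=H(S_{i+1})-H(S_i)$. The goal at each step is to prove a geometric shrinkage of the gap to $H(S^*)$, using the submodularity ratio $\gamma$ to turn the greedy marginal into progress toward OPT and the generalized curvature $\alpha$ to compensate for the fact that in the non-submodular regime the marginal $\Delta_{S_i}(p)$ can be much smaller than $\Delta_\emptyset(p)$.

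First I would establish the per-step inequality. For $1\le i\le k-1$, the greedy rule picks the point with the largest marginal in $D\setminus S_i$, so $\Delta_{i+1}\ge\Delta_{S_i}(p)$ for every $p\in S^*\setminus S_i$. Averaging over these at most $k$ elements and invoking Equation~\eqref{eq:subratio} gives
\[
k\Delta_{i+1}\;\ge\;\sum_{p\in S^*\setminus S_i}\Delta_{S_i}(p)\;\ge\;\gamma\bigl[H(S^*\cup S_i)-H(S_i)\bigr].
\]
Then I would enumerate $S^*\setminus S_i=\{o_1,\ldots,o_m\}$, telescope $H(S^*\cup S_i)-H(S_i)=\sum_{j=1}^m\Delta_{S_i\cup\{o_1,\ldots,o_{j-1}\}}(o_j)$, and apply the generalized-curvature inequality $\Delta_{S\setminus\{p\}\cup T}(p)\ge(1-\alpha)\Delta_{S\setminus\{p\}}(p)$ element by element with the right choice of $T$ so that, after using monotonicity of $H$ to absorb the intersection term $H(S^*\cap S_i)$, the telescoped sum collapses to
\[
H(S^*\cup S_i)-H(S_i)\;\ge\;H(S^*)-\alpha\,H(S_i).
\]
Combining the two displays yields the per-step bound $\Delta_{i+1}\ge\tfrac{\gamma}{k}\bigl[H(S^*)-\alpha H(S_i)\bigr]$.

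Second, I would unroll. Setting $\phi_i:=\tfrac{1}{\alpha}H(S^*)-H(S_i)$, the per-step inequality rearranges to $\phi_{i+1}\le\bigl(1-\tfrac{\alpha\gamma}{k}\bigr)\phi_i$. Because \AlgPG preselects $S_1=\{q_1\}$ and then runs the greedy rule for $i=1,\ldots,k-1$, iterating this recursion $k-1$ times gives
\[
\phi_k\;\le\;\Bigl(1-\frac{\alpha\gamma}{k}\Bigr)^{k-1}\phi_1\;\le\;\Bigl(1-\frac{\alpha\gamma}{k}\Bigr)^{k-1}\frac{H(S^*)}{\alpha},
\]
where the second inequality uses $H(S_1)\ge 0$. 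Substituting back the definition of $\phi_k$ and rearranging delivers $H(S_k)\ge\tfrac{1}{\alpha}\bigl[1-(1-\alpha\gamma/k)^{k-1}\bigr]H(S^*)$, which is the claim.

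The main obstacle I expect is the curvature step, namely the lower bound $H(S^*\cup S_i)-H(S_i)\ge H(S^*)-\alpha H(S_i)$. A naive telescoping of the curvature inequality over $S^*\setminus S_i$ only yields the weaker $(1-\alpha)\bigl[H(S^*)-H(S^*\cap S_i)\bigr]$, whose $1-\alpha$ factor would give a strictly worse approximation with $\gamma(1-\alpha)$ in the exponent instead of $\alpha\gamma$. Getting the sharper form requires choosing ``$T$'' in the curvature inequality so that the prefixes cancel against $H(S^*\cap S_i)$ and combining with monotonicity to replace the intersection by $\alpha H(S_i)$; this is precisely the calculation that produces the $\tfrac{1}{\alpha}$ prefactor in the final bound. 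The preselection itself is benign and merely shaves one iteration off the recursion, turning the usual exponent $k$ into $k-1$.
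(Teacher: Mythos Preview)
Your plan founders at exactly the point you flag as the ``main obstacle'': the inequality
\[
H(S^*\cup S_i)-H(S_i)\;\ge\;H(S^*)-\alpha\,H(S_i)
\]
is \emph{false} in general. Telescoping $H(S^*\cup S_i)=H(S^*)+\sum_{q_j\in S_i\setminus S^*}\Delta_{S^*\cup S_{j-1}}(q_j)$ and applying curvature termwise gives only
\[
H(S^*\cup S_i)-H(S_i)\;\ge\;H(S^*)-\alpha\!\!\sum_{q_j\in S_i\setminus S^*}\!\!\Delta_j\;-\!\!\sum_{q_j\in S_i\cap S^*}\!\!\Delta_j,
\]
so the elements of $S_i\cap S^*$ carry coefficient $1$, not $\alpha$. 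Upgrading that $1$ to $\alpha$ is equivalent to $(\alpha-1)\sum_{q_j\in S_i\cap S^*}\Delta_j\ge 0$, which fails whenever $\alpha<1$ and the intersection is nonempty. No ``choice of $T$'' in the curvature definition can repair this: the constraint $p\in S\setminus T$ forces $T$ to avoid the element whose marginal you are bounding, and monotonicity only gives $H(S^*\cap S_i)\le H(S_i)$, not $\le\alpha H(S_i)$. Consequently your clean recursion $\phi_{i+1}\le(1-\alpha\gamma/k)\phi_i$ is not available for arbitrary $S_i$.

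The paper sidesteps this by not attempting a universal per-step recursion. Instead it derives, for every $t\ge 1$, the inequality
\[
H(S^*)\;\le\;\alpha\!\!\sum_{q_j\in S_t\setminus S^*}\!\!\Delta_j\;+\!\!\sum_{q_j\in S_t\cap S^*}\!\!\Delta_j\;+\;\frac{k-|S^*\cap S_t|}{\gamma}\,\Delta_{t+1},
\]
normalizes $x_i:=\Delta_i/H(S^*)$, and treats the approximation ratio $\sum_i x_i$ as the value of a linear program whose constraint matrix depends on which indices lie in $S_k\cap S^*$. The key step, borrowed from Bian et al., is that this LP attains its minimum when $S_k\cap S^*=\emptyset$; only \emph{then} do the constraints reduce to the lower-triangular system whose solution yields $\tfrac{1}{\alpha}\bigl[1-(1-\alpha\gamma/k)^{k-1}\bigr]$. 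In effect, the reduction to the empty-intersection case is the substantive content that your argument is missing; once it is in hand, your recursion is precisely what the LP computes. Your treatment of the preselection (losing one greedy step, hence exponent $k-1$) and the use of the submodularity ratio in the first display are fine.
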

\begin{proof}
	Let $H(S)=H_D(S, \mathcal{U})$, $\forall S\subseteq D$. According to elementary set theory, we have
	\begin{align}
	\label{eq:the11}
	&H(T\cup S_t)=H(T)+\sum_{q_i\in S_t}\Delta_{T\cup S_{i-1}}(q_i)  \nonumber \\
	&=H(T)+\sum_{q_i\in S_t\backslash T}\Delta_{T\cup S_{i-1}}(q_i)+\sum_{q_i\in S_t\cap T}\Delta_{T\cup S_{i-1}}(q_i)  \nonumber \\
	&=H(T)+\sum_{q_i\in S_t\backslash T}\Delta_{T\cup S_{i-1}}(q_i)
	\end{align}
	From the definition of the submodularity ratio, we have
	\begin{align}
	\label{eq:the12}
	H(T\cup S_t)\leq H(S_t)+\frac{1}{\gamma}\sum_{p\in T\backslash S_t}\Delta_{S_t}(p)
	\end{align}
	From the definition of curvature, we have,
	\begin{align}
	\label{eq:the13}
	\sum_{q_i\in S_t\backslash T}\Delta_{T\cup S_{i-1}}(q_i)\geq (1-\alpha)\sum_{q_i \in S_t\backslash T}\Delta_{S_{i-1}}(q_i)
	\end{align}
	Note that we use the shorthand $\Delta_i=\Delta_{S_{i-1}}(q_i)$. By combining Equations \ref{eq:the12}, \ref{eq:the13} to \ref{eq:the11}, we have
	\begin{align}
	\label{eq:the14}
	&H(T)= H(T\cup S_t)-\sum_{q_i\in S_t\backslash T}\Delta_{T\cup S_{i-1}}(q_i)  \nonumber \\
	&\leq \alpha\sum_{q_i\in S_t\backslash T}\Delta_i+H(S_t)-\sum_{q_i\in S_t\backslash T}\Delta_i+\frac{1}{\gamma}\sum_{p\in T\backslash S_t}\Delta_{S_t}(p) \nonumber \\
	&=\alpha\sum_{q_i\in S_t\backslash T}\Delta_i+\sum_{q_i\in S_t\cap T}\Delta_i+\frac{1}{\gamma}\sum_{p\in T\backslash S_t}\Delta_{S_t}(p) \nonumber \\
	&\leq \alpha\sum_{q_i\in S_t\backslash T}\Delta_i+\sum_{q_i\in S_t\cap T}\Delta_i+\frac{1}{\gamma}(k-|T\cap S_t|)\Delta_{t+1}
	\end{align}
	When $t=0$, $\Delta_{S_t}(p)\leq \Delta_{t+1}$ does not hold since the first point selected is the point maximizing the first coordinate instead of the first point chosen by the greedy process. In addition, $\Delta_{S_t}(p)\leq \Delta_{t+1}$ \textit{w.h.p.} holds when $t\geq1$ according to Heuristic \ref{he:point}.
	
	We know that $H(S_k)=\sum_{i=1}^{k}\Delta_i$ (telescoping sum). Hence, the approximation ratio is $\frac{H(S_k)}{H(S^*)}=\sum_i\frac{\Delta_i}{H(S^*)}$, which is denoted as $I(\{o_1,...,o_m\})=\sum_i\frac{\Delta_i}{H(S^*)}$. 
	Since $H(\cdot)$ is nondecreasing, according to Equation \ref{eq:the14}, we have
	\begin{align}	
	&\alpha\sum_{q_i\in S_t\backslash S^*}\frac{\Delta_i}{H(S^*)}+\sum_{q_i\in S_t\cap S^*}\frac{\Delta_i}{H(S^*)}+ \nonumber\\
	&\frac{1}{\gamma}(k-|S^*\cap S_t|)\frac{\Delta_{t+1}}{H(S^*)} \geq 1 \nonumber
	\end{align}
	\label{eq:the15}
	where $ t\geq 1$ and let $x_i=\frac{\Delta_i}{H(S^*)}$, we have
	\begin{align}	
	&\alpha\sum_{q_i\in S_t\backslash S^*}x_i+\sum_{q_i\in S_t\cap S^*}x_i+\frac{1}{\gamma}(k-|S^*\cap S_t|)x_{t+1}  \geq 1
	\label{eq:core}
	\end{align}


Based on~\cite{Bian:2017}, we have the following claim held. If $f(S^*)>0$, then $I(\{o_1,...,o_m\})\geq I(\emptyset)$, and the optimal value of $I(\emptyset)$ is $\sum_{i=1}^{k} y_i$, where $y_i=\frac{\gamma}{k}\beta_i$ and $\beta_i=(1-\frac{\gamma\alpha}{k})^{i-1}, i=1,...,k$. The claim indicates that the worst-case approximation ratio occurs when $S_k\cap S^*=\emptyset$. The constructed linear program associated with $I(\emptyset)$ is
\[
I(\emptyset)=\min\sum_{i=1}^kx_i, \  s.t. \  x_i\geq 0
\]
According to Equation \ref{eq:core}, the structure of the constraint matrix in the LP associated with $I(\emptyset)$ is
\begin{equation}
\label{eq:the16matrix}
	\footnotesize
	\begin{bmatrix}
		\alpha & \frac{k}{\gamma} & & & & & &  \\
		\vdots & \vdots & \ddots & & & & \textbf{0} & \\
		\alpha & \alpha & \cdots & \frac{k}{\gamma}  & & & & \\
		\alpha & \alpha & \cdots & \alpha & \frac{k}{\gamma} & & &  \\
		\alpha & \alpha & \cdots & \alpha & \alpha & \frac{k}{\gamma} & &  \\
		\vdots & \vdots & & \vdots & \vdots & \vdots & \ddots & \\
		\alpha & \alpha & \cdots & \alpha & \alpha & \alpha & \cdots & \frac{k}{\gamma} \\
	\end{bmatrix}_{(k-1,k)}
	\cdot
	\begin{bmatrix}
		x_1 \\
		x_2 \\
		\vdots \\
		x_a \\
		x_b \\
		x_c \\
		\vdots \\
		x_k \\
	\end{bmatrix}
	\geq
	\begin{bmatrix}
		1 \\
		\vdots \\
		1 \\
		1 \\
		1 \\
		\vdots \\
		1 \\
	\end{bmatrix}
\end{equation}

Let $y_1=\beta_1$. Then, $y_i=\frac{\gamma}{k}\beta_i$,  $\beta_i=(1-\alpha\beta_{1})(1-\frac{\alpha\gamma}{k})^{i-2}, i=2,...,k$. We can observe that the vector $\textbf{y}\in \mathbb{R}_+^k$ satisfies all the constraints, and each row in Equation \ref{eq:the16matrix} is tight; hence, \textbf{y} is the optimal solution. Therefore,
\[
I(\emptyset)=\sum_{i=1}^ky_i=\beta_{1}+\frac{1}{\alpha}(1-\alpha\beta_{1})[1-(1-\frac{\alpha\gamma}{k})^{k-1}].
\]

We know that $\beta_{1}\geq0$ and function $I(\emptyset)$ is monotonically increasing function of $\beta_{1}$. Hence, $I(\emptyset)$ is no smaller than $\frac{1}{\alpha}[1-(1-\frac{\alpha\gamma}{k})^{k-1}]$, and  $\frac{1}{\alpha}[1-(1-\frac{\alpha\gamma}{k-1})^{k-1}] \ge  \frac{1}{\alpha}[1-(1-\frac{\alpha\gamma}{k})^{k-1}]$.

In summary, $I(\{o_1,...,o_m\})\geq I(\emptyset)\geq \frac{1}{\alpha}[1-(1-\frac{\alpha\gamma}{k})^{k-1}]$, \ie $H_D(S_k, \mathcal{U})\geq \frac{1}{\alpha}[1-(1-\frac{\alpha\gamma}{k})^{k-1}]H_D(S^*, \mathcal{U})$. The theorem is correct.
\end{proof}

From Theorem \ref{thm:preselection}, we can see that when $\alpha$ and $\gamma$ approach 1, the approximation ratio is closely equal to $1-1/e$ with not too small $k$, \eg $k>5$. Even though $\alpha$ and $\gamma$ take values from the range $[0,1]$, Theorem \ref{thm:preselection} provides an approximation ratio of Algorithm \ref{Alg:PG}, which guarantees that the \AlgNWF framework cannot be arbitrarily bad.

\subsection{Accelerated Processing via Sampling}
\label{sec:alg2}

Even though the \AlgPG algorithm can efficiently compute the result with theoretical guarantees, it is still not scalable when $n$ is large. To scale for large datasets, a common method uses a sampling-based technique to reduce the number of data points evaluated~\cite{Mirzasoleiman:2015} and returns a result with tight theoretical guarantees.
Therefore, we propose a sampling-based method, \AlgSPG, which extends the proposed \AlgPG algorithm.
The algorithm details are shown in Algorithm~\ref{Alg:SPG}, which replaces lines 3-4 in Algorithm~\ref{Alg:PG} with lines 1-3. 
The difference between \AlgSPG and \AlgPG is that \AlgPG finds a point from $D\backslash S_{i-1}$ directly, while \AlgSPG samples a subset $R$ randomly and
then finds the point in $R$ which contributes the most to the value of the happiness ratio. To bound the quality of returned results for a sampling-based approach, a critical issue is to carefully choose the sample size. To do this, in each iteration of the algorithm, we sample a set $R$ of size $s=\frac{n}{k}\log(\frac{\lambda}{\lambda-1+\epsilon})$ uniformly at random, where $\epsilon > 0$ and $\lambda\geq 1$, which is the sample factor. 

The proposed algorithm is simple, but it can provide high-quality results, which is verified by the following concrete example, theoretical analysis and our experimental results. 

\begin{algorithm}[!htb]
	\vspace{0.6mm} \nonl \textsc{\underline{// Replace Line 3-4 in Alg.~\ref{Alg:PG} with following code}} \\ \vspace{0.6mm}
	Let $h^*=1$ and $q_i=null$\;
	Obtain a random subset $R$ by sampling $s$ random points from $D\backslash S_{i-1}$\;
	\textbf{for} $each$ $p_{j}\in {R}$ \textbf{do}
	\caption{\AlgSPG $(D, k)$}
	\label{Alg:SPG}
\end{algorithm}

\begin{table}
	\centering
	\setlength{\tabcolsep}{0.5mm}
	\scriptsize
	\caption{Example of the \AlgSPG algorithm. We run \AlgSPG algorithm 4 times and show each step of selecting points in each time}
	\label{stocpresgreed}
	\begin{tabular}{|c|c|ccc|ccc|c|c|}
		\hline
		\multirow{3}{*}{Alg.} & \multirow{2}{*}{Step 1} & \multicolumn{6}{c|}{Step 2}  &\multirow{2}{*}{result}   &\multirow{3}{*}{$\min$ h.r.} \\ \cline{3-8}
		& & \multicolumn{3}{c|}{iteration 1} & \multicolumn{3}{c|}{iteration 2}  & & \\ \cline{2-9}
		& $S_1$ &samples &$H_{1,j}$  & $S_2$ &samples &$H_{2,j}$  & $S_3$ &$S_3$ & \\
		\hline\hline
		\multirow{6}{*}{SPG1}&  \multirow{6}{*}{$\{1\}$} & 2 &0.91  &\multirow{6}{*}{$\{1,9\}$}  &2 &0.91 &\multirow{6}{*}{$\{1,9,3\}$} &\multirow{6}{*}{$\{1,9,3\}$}  &\multirow{6}{*}{$1.00$}   \\
		&   & 3 &0.84  &  &3 &\textbf{0.84} & &  &   \\
		&   & 6 &0.90 &  &4 &1.00 & &  &   \\
		&   & 7 &0.71  &  &5 &1.00 & &  &   \\
		&   & 8 &0.84  &  &6 &0.90 & &  &   \\
		&   & 9 &\textbf{0.63}  &  &8 &0.84 & &  &   \\ \hline
		\multirow{5}{*}{SPG2}&  \multirow{5}{*}{$\{1\}$} & 2 & 0.91 & \multirow{5}{*}{$\{1,9\}$} &2 &0.91 &\multirow{5}{*}{$\{1,9,8\}$} &\multirow{5}{*}{$\{1,9,8\}$}  &\multirow{5}{*}{$1.00$}   \\
		&   & 5 & 0.71 &  &4 &1.00 & &  &   \\
		&   & 6 & 0.90 &  &5 &1.00 & &  &   \\
		&   & 7 & 0.71 &  &6 &0.90 & &  &   \\
		&   & 9 & \textbf{0.63} &  &8 &\textbf{0.84} & &  &   \\ \hline
		\multirow{4}{*}{SPG3}&  \multirow{4}{*}{$\{1\}$} & 2 & 0.91 &\multirow{4}{*}{$\{1,7\}$}  &3 &\textbf{0.84} &\multirow{4}{*}{$\{1,7,3\}$} &\multirow{4}{*}{$\{1,7,3\}$}  &\multirow{4}{*}{$0.89$}   \\
		&   & 4 & 0.91 &  &4 &1.00 & &  &   \\
		&   & 7 & \textbf{0.71} &  &5 &1.00 & &  &   \\
		&   & 8 & 0.84 &  &9 &0.89 & &  &   \\ \hline
		\multirow{3}{*}{SPG4}&  \multirow{3}{*}{$\{1\}$} & 2 & 0.91 &\multirow{3}{*}{$\{1,7\}$}  &2 &\textbf{0.91} &\multirow{3}{*}{$\{1,7,2\}$} &\multirow{3}{*}{$\{1,7,2\}$}  &\multirow{3}{*}{0.89}   \\
		&   & 6 & 0.90 &  &4 &1.00 & &  &   \\
		&   & 7 & \textbf{0.71} &  &5 &1.00 & &  &   \\
		\hline
		
	\end{tabular}
\end{table}

\begin{example}
	\label{SPGExample}
	We select the same $9$ points as in Table \ref{presgreed} and use them to illustrate the process of the \AlgSPG algorithm. We also assume that the class of linear utility functions $\mathcal{U}$ contains 4 linear utility functions ($\ie$ $v_0=\left< 0.9,0.05,0.05 \right>$, $v_{1}=\left<0.05,0.9,0.05 \right>$, $v_{2}=\left<0.05,0.05,0.9\right>$ and $v_{3}=\left<0.33,0.33,0.34\right>$). Therefore, the attribute values of each point are listed in Table \ref{NBA_hr}, and the corresponding utilities of the 4 utility functions are the same as in Table \ref{presgreed}. We run our \AlgSPG algorithm 4 times, and  the sample sizes are 6, 5, 4 and 3. We aim to select $k=3$ points to the result set, and the results are shown in Table \ref{stocpresgreed} corresponding to algorithms SPG1$\sim$4. Let $H_{i,j}$ denote $H_{S_i\cup \{p_j\}}(S_i,\mathcal{U})$ for simplicity. For SPG1, first, we select Kevin Durant to $S_1$. Next, for iteration 1, we sample 6 points from the remaining points and calculate the values of $H_{1,j}$ of these 6 points. Then we select the point Dwight Howard to $S_2$ whose $H_{1,j}$ is minimal. For iteration 2, we also sample 6 points from the remaining points, and use the same approach to select the point Dwyane Wade to $S_3$. After this iteration, we select 3 points to result set $S_3$; then, the minimum happiness ratio of $S_3$ is 1. The processes of SPG2$\sim$4 are the same as SPG1.
	
	We can see that with 2 times, the minimum happiness ratios are equal to 1, which is the same as the result of \AlgPG. The average minimum happiness ratio of the 4 algorithms is 0.945, which is very close to 1. If we choose the maximum value among the results, it has a great chance to select a result as good as that of \AlgPG. In addition, the result of random sampling is also influenced by $k$. When sampling a subset $R$ with size $s$ from the dataset $D$, the probability of a data point $p$ being selected is $Pr(p)=1-(C_{n-1}^s/C_n^s)^k$ for sampling without replacement, which approaches 1 when $k$ increases.
\end{example}

A theorem shows that the sample size used in Algorithm~\ref{Alg:SPG} is sufficient and the returned result is theoretically bounded. 
Before we introduce the details of the theorem, we first show that the expected marginal gain of the algorithm in each iteration is bounded in Lemma~\ref{le:SPGstep}.

\begin{lemma}
	\label{le:SPGstep}
	Given the current solution $S$ and $\epsilon > 0$, $\lambda\ge1$, the expected marginal gain of \em\AlgSPG\em in each iteration (except the first step) is at least $\frac{1-\epsilon}{\lambda k}\sum_{p\in S^*\backslash S}\Delta_S(p)$.
\end{lemma}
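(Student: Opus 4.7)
The plan is to adapt the stochastic-greedy analysis of Mirzasoleiman et al.\ to our monotone nonsubmodular setting. Fix the current solution $S$, put $A = S^*\setminus S$ with $a=|A|\le k$, and let $N=|D\setminus S|\le n$ denote the pool from which $R$ is drawn uniformly without replacement. Since \AlgSPG selects $p^* = \arg\max_{p\in R}\Delta_S(p)$ (using Heuristic~\ref{he:point} to translate the rule ``minimize $H_{S\cup\{p\}}(S,\mathcal{U})$'' into ``maximize the marginal gain''), we have
\[
\Delta_S(p^*) \;\ge\; \max_{p\in R\cap A}\Delta_S(p),
\]
with the convention that the maximum over an empty set is $0$. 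It therefore suffices to lower-bound the expectation of the right-hand side.

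The first key step is the symmetry identity
\[
E\!\left[\max_{p\in R\cap A}\Delta_S(p)\right]\;\ge\; \frac{P(R\cap A\neq\emptyset)}{a}\sum_{p\in A}\Delta_S(p).
\]
I would obtain it by bounding the max by the average over $R\cap A$ whenever that intersection is nonempty, rewriting the average as $\sum_{p\in A}\Delta_S(p)\cdot\mathbb{1}(p\in R)/|R\cap A|$ (with the convention $0/0=0$), and observing that by exchangeability of the elements of $A$ under uniform sampling, $E\!\left[\mathbb{1}(p\in R)/|R\cap A|\right]$ is the same constant $c$ for every $p\in A$. Summing the indicators gives $ac = P(R\cap A\neq\emptyset)$, so $c = P(R\cap A\neq\emptyset)/a$ and the identity follows from linearity of expectation.

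The second key step is to show $P(R\cap A\neq\emptyset)/a \ge (1-\epsilon)/(\lambda k)$. The standard bound on the hypergeometric tail gives $P(R\cap A=\emptyset) = \binom{N-a}{s}/\binom{N}{s} \le (1-a/N)^s \le e^{-sa/N}\le e^{-sa/n}$. Substituting $s=(n/k)\log\bigl(\lambda/(\lambda-1+\epsilon)\bigr)$ and writing $r=(\lambda-1+\epsilon)/\lambda\in(0,1)$ and $x=a/k\in(0,1]$ yields $P(R\cap A\neq\emptyset)\ge 1-r^{x}$. Because $x\mapsto 1-r^x$ is concave on $[0,1]$ with endpoint values $0$ and $1-r$, it dominates its chord $x(1-r)$, so $1-r^{a/k}\ge (a/k)(1-r)=(a/k)\cdot(1-\epsilon)/\lambda$. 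Dividing by $a$ and chaining the displayed inequalities delivers the claimed bound $E[\Delta_S(p^*)]\ge \frac{1-\epsilon}{\lambda k}\sum_{p\in S^*\setminus S}\Delta_S(p)$.

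The main point that needs care, and the step I expect to be the chief obstacle, is the exchangeability identity: one must verify that the $0/0=0$ convention is handled cleanly and that symmetry among elements of $A$ holds under sampling without replacement from $D\setminus S$ (whose size $N$ may be strictly less than $n$), rather than from $D$. Once that is pinned down, the remainder is a routine combinatorial tail bound plus the one-variable concavity argument on $1-r^x$.
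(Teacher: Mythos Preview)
Your proposal is correct and follows essentially the same route as the paper: bound $\Pr[R\cap(S^*\setminus S)=\emptyset]$ by an exponential tail, apply the concavity-of-$1-r^x$ chord argument to turn it into $(1-\epsilon)|S^*\setminus S|/(\lambda k)$, and then invoke the exchangeability of elements of $S^*\setminus S$ under uniform sampling to reduce the greedy step to an average. Your exchangeability identity is in fact a more careful version of the paper's one-line symmetry claim, and your explicit hypergeometric (without-replacement) bound $\binom{N-a}{s}/\binom{N}{s}\le(1-a/N)^s$ is a mild refinement over the paper's $(1-a/N)^s$, but the structure of the argument is the same.
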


\begin{proof}	
	To prove the lemma, we need to estimate the probability that $R\cap (S^*\backslash S)\neq \emptyset$. The set $R$ consists of $s$ random samples from $D\backslash S$, hence
	\begin{align*}
	Pr[R\cap (S^*\backslash S)=\emptyset]&=\left( 1-\frac{|S^*\backslash S|}{|D\backslash S|} \right)^s  \\
	&\leq e^{-s\frac{|S^*\backslash S|}{|D\backslash S|}}\leq e^{-\frac{s}{n}|S^*\backslash S|}
	\end{align*}
	Therefore, by using the concavity of $1-e^{-\frac{s}{n}x}$ as a function of $x$ and the fact that $x=|S^*\backslash S|\in [0, k]$, we obtain
	\begin{align*}
	Pr[R\cap (S^*\backslash S)\neq \emptyset]\geq 1-e^{-\frac{s}{n}|S^*\backslash S|}\geq (1-e^{-\frac{sk}{n}})\frac{|S^*\backslash S|}{k}.
	\end{align*}
	Recall that we set $s=\frac{n}{k}\log(\frac{\lambda}{\lambda-1+\epsilon})$, which gives
	\begin{align}
	\label{eq:the21}
	Pr[R\cap (S^*\backslash S)\neq \emptyset]\geq (1-\epsilon)\frac{|S^*\backslash S|}{\lambda k}.
	\end{align}
	In algorithm \AlgSPG, we select an element $p\in R$ that contributes the most to the solution set $S$. It is clear that the marginal value of $p$ is no less than the maximum contribution of an element randomly chosen from $R\cap (S^*\backslash S)$ (if not empty). Overall, $R$ is equally likely to contain each element of $S^*\backslash S$, so a randomly selected element from $R\cap (S^*\backslash S)$ is actually a randomly selected element from $S^*\backslash S$. Thus, we obtain
	\begin{align*}
	\mathbf{E}[\Delta_S(p)]\geq Pr[R\cap (S^*\backslash S)\neq \emptyset]\times \frac{1}{|S^*\backslash S|}\sum_{p\in S^*\backslash S}\Delta_S(p).
	\end{align*}
	Hence, by combining Equation \ref{eq:the21}, the lemma is correct, \ie $\mathbf{E}[\Delta_S(p)]\geq \frac{1-\epsilon}{\lambda k}\sum_{p\in S^*\backslash S}\Delta_S(p)$.
\end{proof}

\begin{theorem}\label{thm:sampling}
	Given $\epsilon >0$, $\lambda\ge1$ and $s=\frac{n}{k}\log(\frac{\lambda}{\lambda-1+\epsilon})$, then \em\AlgSPG\em can achieve an $(1-e^{-\frac{(1-\epsilon)(k-1)\gamma}{\lambda k}})$ approximation ratio guarantee in expectation.
\end{theorem}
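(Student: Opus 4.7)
The plan is to use Lemma~\ref{le:SPGstep} to set up a contraction recurrence for the expected optimality gap $H(S^*) - H(S_i)$, and then iterate this contraction over the $k-1$ sampling steps (recall that the first selection is the preselected coordinate-maximizer, which does not use sampling).

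The first step is to pass from the sum $\sum_{p\in S^*\backslash S_i}\Delta_{S_i}(p)$ that appears in Lemma~\ref{le:SPGstep} to the actual gap $H(S^*) - H(S_i)$. I would do this by invoking the definition of the submodularity ratio with $T=S^*$ and $S=S_i$, obtaining
\begin{align*}
\sum_{p\in S^*\backslash S_i}\Delta_{S_i}(p) \ \geq\ \gamma\,\Delta_{S_i}(S^*) \ =\ \gamma\bigl(H(S_i\cup S^*)-H(S_i)\bigr) \ \geq\ \gamma\bigl(H(S^*)-H(S_i)\bigr),
\end{align*}
where the final inequality uses monotonicity of $H=H_D(\cdot,\mathcal{U})$. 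Combining this with Lemma~\ref{le:SPGstep} gives the per-step contraction (conditional on $S_i$) for every $i\geq 1$:
\begin{align*}
\mathbf{E}\bigl[H(S_{i+1})-H(S_i)\,\bigm|\,S_i\bigr] \ \geq\ \frac{(1-\epsilon)\gamma}{\lambda k}\bigl(H(S^*)-H(S_i)\bigr).
\end{align*}

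Rearranging and writing $\delta=\tfrac{(1-\epsilon)\gamma}{\lambda k}$, this is equivalent to
\begin{align*}
\mathbf{E}\bigl[H(S^*)-H(S_{i+1})\,\bigm|\,S_i\bigr] \ \leq\ (1-\delta)\bigl(H(S^*)-H(S_i)\bigr).
\end{align*}
Taking full expectations and iterating the recurrence from $i=1$ to $i=k-1$ (that is, over the $k-1$ iterations of \AlgSPG\ that actually invoke the random sample $R$), I obtain
\begin{align*}
\mathbf{E}\bigl[H(S^*)-H(S_k)\bigr] \ \leq\ (1-\delta)^{k-1}\bigl(H(S^*)-H(S_1)\bigr) \ \leq\ (1-\delta)^{k-1}H(S^*),
\end{align*}
where the last step uses $H(S_1)\geq 0$. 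Applying the standard bound $1-x\leq e^{-x}$ then yields
\begin{align*}
\mathbf{E}[H(S_k)] \ \geq\ \bigl(1-(1-\delta)^{k-1}\bigr)H(S^*) \ \geq\ \Bigl(1-e^{-\tfrac{(1-\epsilon)(k-1)\gamma}{\lambda k}}\Bigr)H(S^*),
\end{align*}
which is exactly the claimed approximation guarantee.

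The main obstacle is bookkeeping rather than a deep technical issue: one must be careful that Lemma~\ref{le:SPGstep} only applies from the second iteration onward (because the first point is preselected, not sampled), which is precisely why the exponent is $k-1$ rather than $k$, and one must justify replacing the sum of marginal gains over $S^*\backslash S_i$ by the single quantity $H(S^*)-H(S_i)$ via the submodularity ratio together with monotonicity. Everything else is a straightforward unrolling of a geometric contraction combined with the sample-size choice $s=\tfrac{n}{k}\log\tfrac{\lambda}{\lambda-1+\epsilon}$ that was already absorbed into Lemma~\ref{le:SPGstep}.
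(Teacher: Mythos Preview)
Your proposal is correct and follows essentially the same route as the paper's proof: apply Lemma~\ref{le:SPGstep}, convert the sum of marginals over $S^*\setminus S_i$ into $\gamma(H(S^*)-H(S_i))$ via the submodularity ratio and monotonicity, then unroll the resulting geometric contraction over the $k-1$ sampling steps and finish with $1-x\le e^{-x}$. The only cosmetic difference is that the paper phrases the induction directly on $\mathbf{E}[H(S_i)]$ rather than on the gap $H(S^*)-H(S_i)$, but the arguments are identical.
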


\begin{proof}
	Let $S_i=\{q_1,...,q_i\}$ denote the solution returned by \AlgSPG after $i$ steps, $i\geq 1$. From Lemma~\ref{le:SPGstep},
	\begin{align}
	\label{eq:the22}
	\mathbf{E}[\Delta_{S_i}(q_{i+1})|S_i]\geq \frac{1-\epsilon}{\lambda k}\sum_{p\in S^*\backslash S_i}\Delta_{S_i}(p).
	\end{align}
	Based on Equation \ref{eq:subratio}, we have
	\begin{align*}
		\small
	\sum_{p\in S^*\backslash S_i}\Delta_{S_i}(p)\geq \gamma\Delta_{S_i}(S^*) =\gamma(H_D(S^*\cup S_i, \mathcal{U})-H_D(S_i, \mathcal{U}))
	\end{align*}
	Based on the monotonic property, we have
	\begin{align*}
	H_D(S^*\cup S_i, \mathcal{U})-H_D(S_i, \mathcal{U})\geq H_D(S^*, \mathcal{U})-H_D(S_i, \mathcal{U})
	\end{align*}
	Based on Equation \ref{eq:the22}, we can obtain
	\begin{align*}
	\mathbf{E}[\Delta_{S_i}(q_{i+1})|S_i]&=\mathbf{E}[H_D(S_{i+1}, \mathcal{U})-H_D(S_i, \mathcal{U})|S_i] \\
	&\geq \frac{1-\epsilon}{\lambda k}\gamma(H_D(S^*, \mathcal{U})-H_D(S_i, \mathcal{U}))
	\end{align*}
	By taking expectation over $S_i$, we have
	\begin{align*}
	&\mathbf{E}[H_D(S_{i+1}, \mathcal{U})-H_D(S_i, \mathcal{U})]\\
	&\geq \frac{1-\epsilon}{\lambda k}\gamma \mathbf{E}[H_D(S^*, \mathcal{U})-H_D(S_i, \mathcal{U})]
	\end{align*}
	
	Note that in our \AlgSPG algorithm, the first point does not contribute to the theoretical guarantee of the lower bound. Therefore, by induction, it implies that
	\begin{align*}
	\mathbf{E}[H_D(S_k, \mathcal{U})] &\geq (1-(1-\frac{1-\epsilon}{\lambda k}\gamma)^{k-1})\mathbf{E}[H_D(S^*, \mathcal{U})] \\
	&\geq (1-e^{-\frac{(1-\epsilon)(k-1)\gamma}{\lambda k}})\mathbf{E}[H_D(S^*, \mathcal{U})]
	\end{align*}
	
	Thus, the theorem is proved.
\end{proof}

In Theorem 2, $\lambda$ is a tradeoff parameter between the sample size and the corresponding approximation ratio.
A larger $\lambda$ denotes a smaller sample size and vice versa. When $\lambda =1$, we tend toward the bound $(1-e^{-\frac{(1-\epsilon)(k-1)\gamma}{k}})$, which extends the bound in \cite{Mirzasoleiman:2015} for monotonic functions from submodular to nonsubmodular and the bound of greedy algorithms in \cite{Khanna:2017} for nonsubmodular (weakly submodular) functions from deterministic to stochastic. The bound also covers the well-known bound $(1-1/e)$ \cite{Nemhauser:1978} for submodular functions. 

Though Theorems \ref{thm:preselection} and \ref{thm:sampling} provide parameterized theoretical guarantees of Algorithms \ref{Alg:PG} and \ref{Alg:SPG}, respectively, in most cases, they approach $1-1/e$ from a theoretical aspect. In practice, they usually approach 1, which can also be verified in our experiments. The proposed theorems guarantee that the \AlgNWF frame with/without sampling cannot be arbitrarily poor.

\section{Empirical Evaluation}
\label{sec:exp}

In this section, we present the results of a comprehensive performance study on both synthetic and real-world datasets to evaluate the efficiency and effectiveness of the techniques proposed in this paper.

\subsection{Experimental Setup}

\myparagraph{Datasets} We run our experiments on 1 synthetic and 4 real-world datasets, which are widely used in previous studies, \eg~\cite{Nanongkai:2010,Peng:2014,Nanongkai:2012,Faulkner:2015,Qi:2018,Xie:2018}. Moreover, similar to studies in the literature \cite{Nanongkai:2010,Peng:2014,Nanongkai:2012,Faulkner:2015,Xie:2018} on $k$-regret queries, we computed the skyline first and then identified $k$ points from it.
The information about the datasets is summarized in Table~\ref{realdatasets}.

\begin{itemize}
	\item \textbf{Anti-correlated} (synthetic dataset). The dataset is created by using the dataset generator in \cite{Borzsony:2001}. It is a 6-dimensional anti-correlated dataset with 10,000 data points and 5,531 skyline points.
	\item The \textbf{NBA}\footnote{\url{https://www.rotowire.com/basketball/}}. NBA dataset is extracted from NBA players' game statistics from 1946 to 2009 with 21,961 records and 164 skyline records, each of which has 8 dimensions.
	\item \textbf{Household}\footnote{\url{http://www.ipums.org}}. It is a 6-dimensional dataset with 127,932 records and 49 skyline records, each of which represents the percentage of an American family's annual income on  different types of expenditures.
	\item \textbf{Movie}\footnote{\url{https://movielens.umn.edu}}. It is a 21-dimensional dataset with 100,000 ratings (1-5) from 943 users on 1,682 movies along with 470 skyline ratings.
	\item \textbf{Weather}\footnote{\url{http://cru.uea.ac.uk/cru/data/hrg/tmc/}}. It is 15-dimensional weather data of 566,268 points with 63,398 skyline points, each of which consists of average monthly precipitation totals and elevation at over half a million sensor locations.
\end{itemize}

\begin{table}[t]
	\centering
	\caption{Dataset Statistics}
	\label{realdatasets}
	\footnotesize
	\renewcommand{\arraystretch}{1.2}
	\begin{tabular}{|c | c | c |}\hline
		\textbf{Dataset}      &\textbf{Dimensionality}    &\textbf{Size}    \\ \hline\hline
		\textbf{Anti-correlated}      &$6$   &$10,000$   \\ \hline
		\textbf{NBA}  &$8$ &$21,961$ \\ \hline
		\textbf{Household}  &$6$  &$127,391$  \\ \hline
		\textbf{Movie}  &$21$  &$1,682$  \\ \hline
		\textbf{Weather}  &$15$  &$566,268$  \\ \hline
	\end{tabular}
\end{table}

\myparagraph{Algorithms} We implement and evaluate the following algorithms in the experiments.
\begin{itemize}
	\item \textbf{\AlgNWF}. Baseline algorithm using the classical greedy framework~\cite{Nemhauser:1978}.
	\item \textbf{\AlgHDR}. The set-cover based algorithm~\cite{Asudeh:2017}.
	\item \textbf{\AlgHDG}. The greedy-based heuristic algorithm \cite{Asudeh:2017}.
	\item \textbf{\AlgKER}. The coreset-based algorithm~\cite{Cao:2017}.
	\item \textbf{\AlgHS}. The hitting set algorithm~\cite{Agarwal:2017}.
	\item \textbf{\AlgSPH}. The state-of-the-art algorithm~\cite{Xie:2018}.
	\item \textbf{\AlgPG}. The greedy algorithm adopts the happiness ratio, which is proposed in Section~\ref{sec:alg1}. 
	\item \textbf{\AlgSPG}. The sampling-based algorithm proposed in Section~\ref{sec:alg2}.
\end{itemize}

Note that there exist improved versions of the \textsc{Cube} algorithm \cite{Nanongkai:2010} and GeoGreedy algorithm \cite{Peng:2014}, namely \AlgKER and \AlgSPH, respectively; we only compare with the improved algorithms.

\myparagraph{Workload and implementation}
To evaluate the performance of the proposed techniques, we report the maximum regret ratio and CPU time by varying $k$, $d$ and $n$. Note that the maximum regret ratios shown in the experiments are calculated by subtracting the corresponding minimum happiness ratios from 1.
We also report the submodularity ratio and curvature for different algorithms.
We only calculate the greedy version of $\gamma^G$ and $\alpha^G$ because it is too time consuming to calculate the full version by exhaustive search.
To complete the process within an acceptable time, $k\in[2,6]$ are selected for the submodularity ratio and curvature experiments.

For \AlgSPG, we only use $\epsilon=0.01,\lambda=1.01$ and $\epsilon=0.1,\lambda=1.1$ to demonstrate its performance named $SPG1$ and $SPG4$, respectively (similar setting as in Example \ref{SPGExample}). From Lemma \ref{le:SPGstep} and Theorem \ref{thm:sampling}, the adopted $\lambda$ values are only slightly larger than 1, which can greatly reduce the dataset size, while there is little change in the approximation ratio. This also shows the advantage of our \AlgSPG algorithm.

We run each setting 20 times and report the average value.

All the algorithms are implemented in C++ with GCC 4.8.5. Experiments are conducted on a workstation with a 3.3GHz CPU using an Ubuntu 16.04 LTS.
Linear programming is implemented with the GNU Linear Programming Kit~\footnote{https://www.gnu.org/software/glpk/}.

\begin{figure*}[!htb]
	\centering
	\includegraphics[width=1\textwidth]{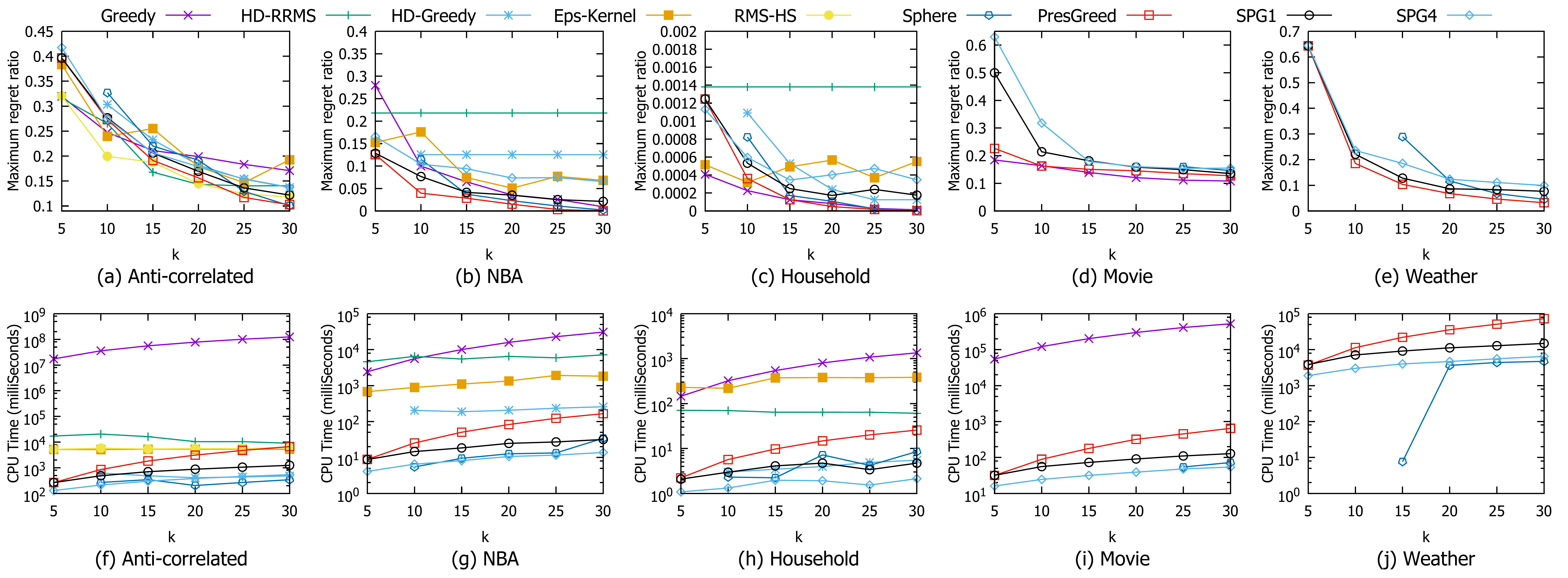}
	\caption{Performance comparisons. (\emph{a})-(\emph{e}) show the maximum regret ratios of all the algorithms for different values of $k$, and (\emph{f})-(\emph{j}) show the CPU times for different values of $k$.
	}
	\label{fig:MHTk}
\end{figure*}

\begin{figure*}[!htb]
	\setlength{\abovecaptionskip}{0cm}
	\setlength{\belowcaptionskip}{-0.5cm}
	\centering
	\includegraphics[width=0.98\textwidth]{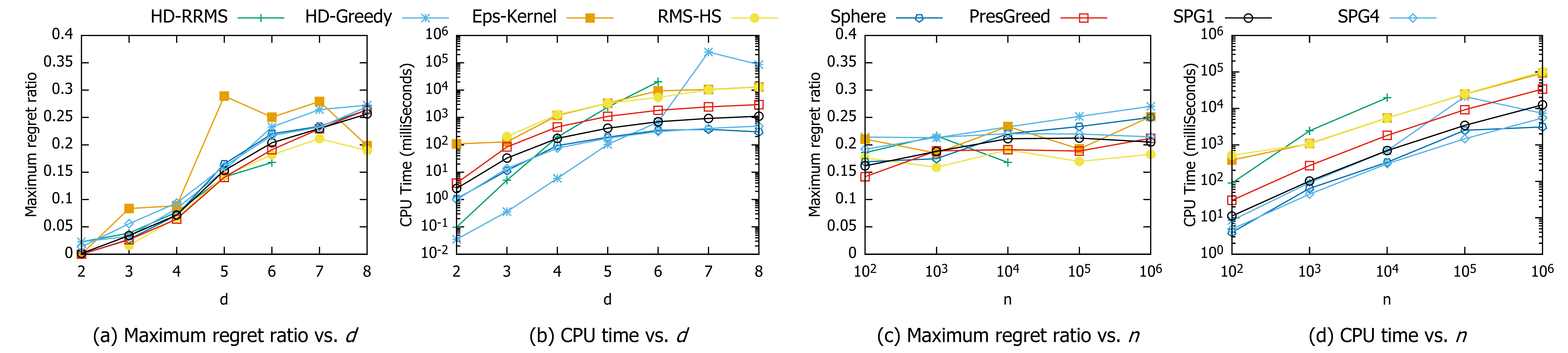}
	\caption{Performance comparisons. (\emph{a}), (\emph{c}) and (\emph{b}), (\emph{d}) show the maximum regret ratios and CPU times of all the algorithms for different values of $d$ and $n$ on the anti-correlated dataset respectively.
	}
	\label{fig:MHTdn}
\end{figure*}

\begin{figure*}[!htb]
	\setlength{\abovecaptionskip}{0cm}
	\setlength{\belowcaptionskip}{-0.5cm}
	\centering
	\subfigure[No Sampling]{\includegraphics[width=0.19\textwidth]{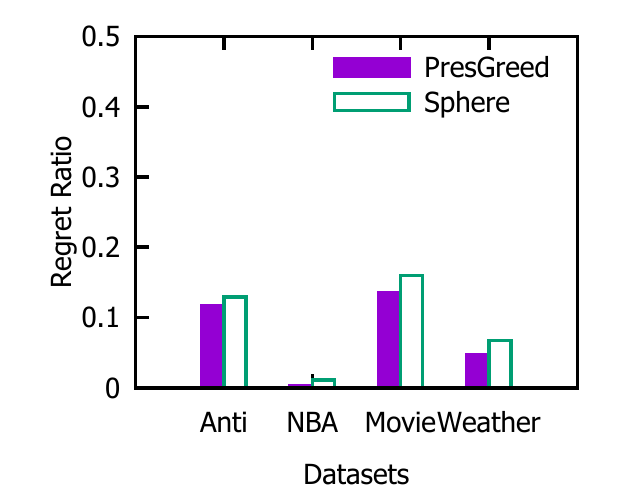}}
	\subfigure[Anti-correlated]{\includegraphics[width=0.19\textwidth]{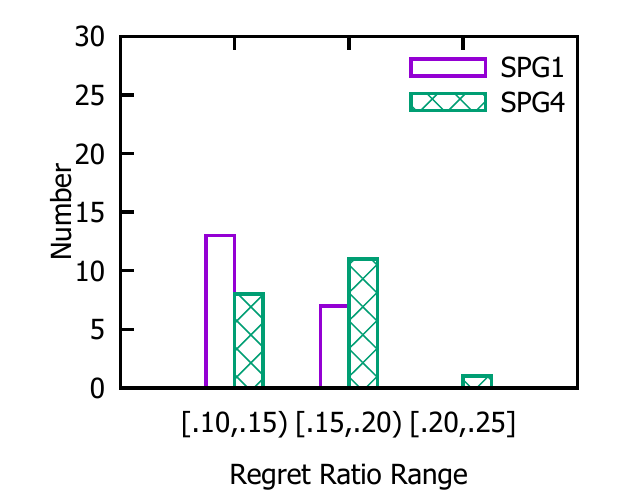}}
	\subfigure[NBA]{\includegraphics[width=0.19\textwidth]{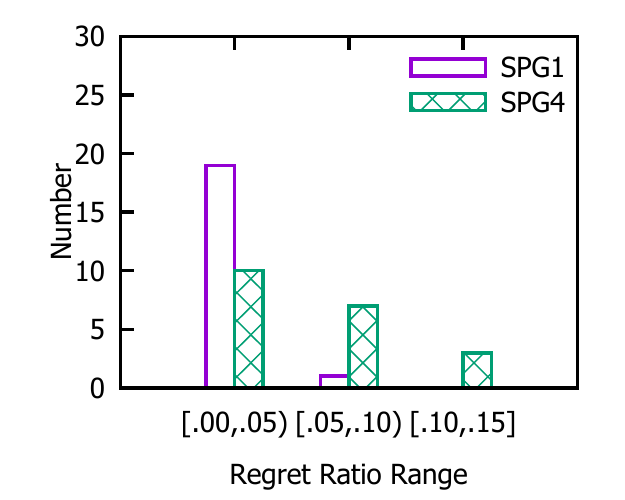}}
	\subfigure[Movie]{\includegraphics[width=0.19\textwidth]{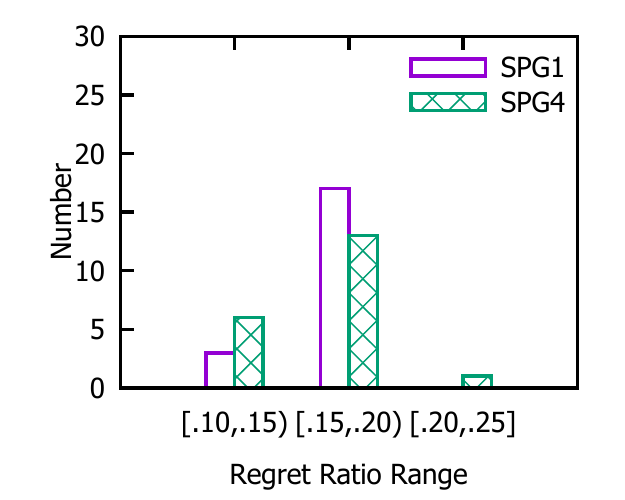}}
	\subfigure[Weather]{\includegraphics[width=0.19\textwidth]{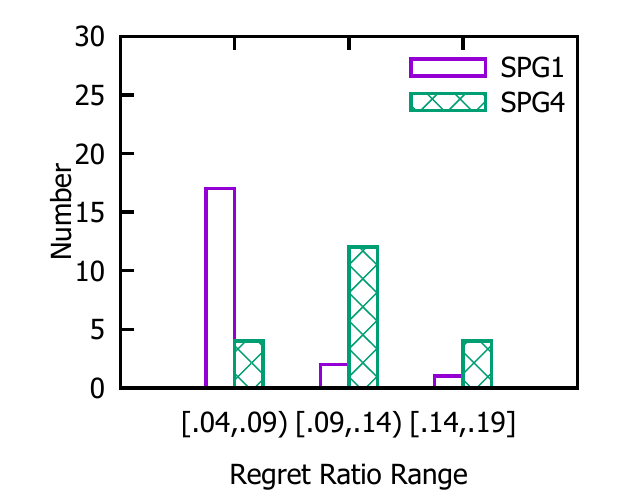}}	
	\caption{Distributions of regret ratios. (\emph{a}) shows the regret ratios on 4 datasets of \AlgPG without sampling, and (\emph{b})-(\emph{e}) show the regret ratio distributions of 20 repeated experiments of the \AlgSPG algorithm at $k=25$ on these datasets.
	}
	\label{fig:MHTdis}
\end{figure*}

\begin{figure*}[!htb]
	\setlength{\abovecaptionskip}{0cm}
	\setlength{\belowcaptionskip}{-0.5cm}
	\centering	
	\subfigure[Anti-correlated]{\includegraphics[width=0.19\textwidth]{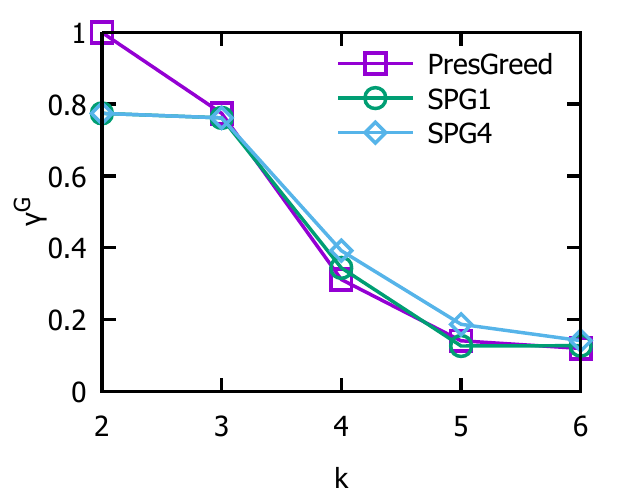}}
	\subfigure[NBA]{\includegraphics[width=0.19\textwidth]{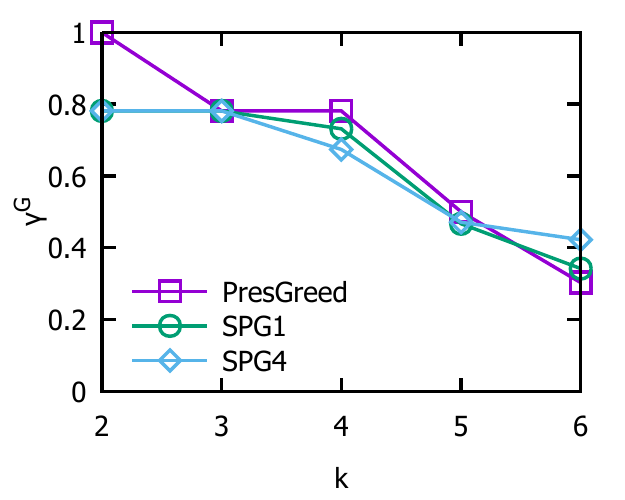}}
	\subfigure[Household]{\includegraphics[width=0.19\textwidth]{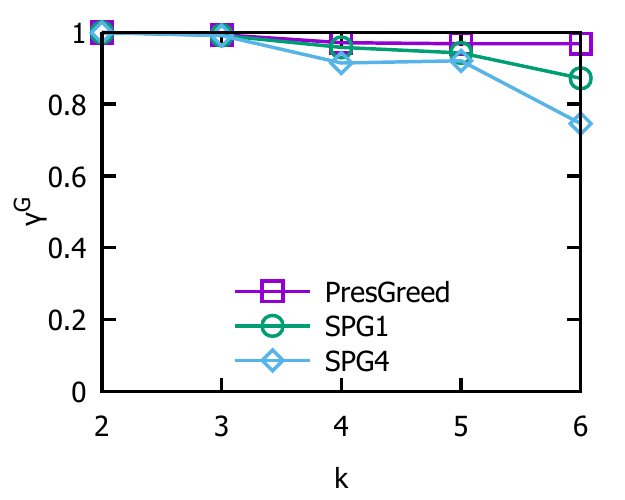}}
	\subfigure[Movie]{\includegraphics[width=0.19\textwidth]{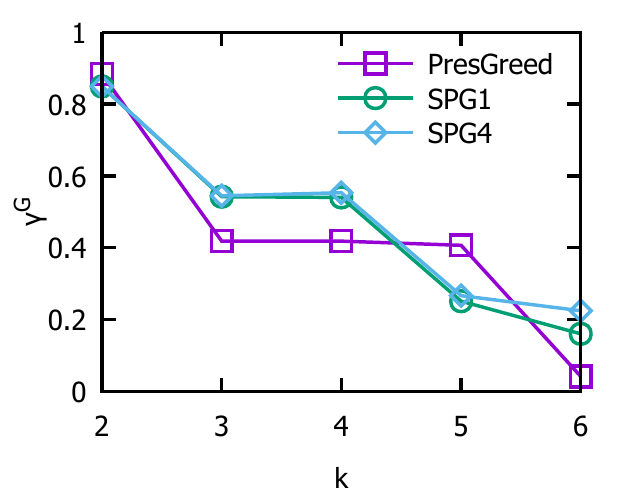}}
	\subfigure[Weather]{\includegraphics[width=0.19\textwidth]{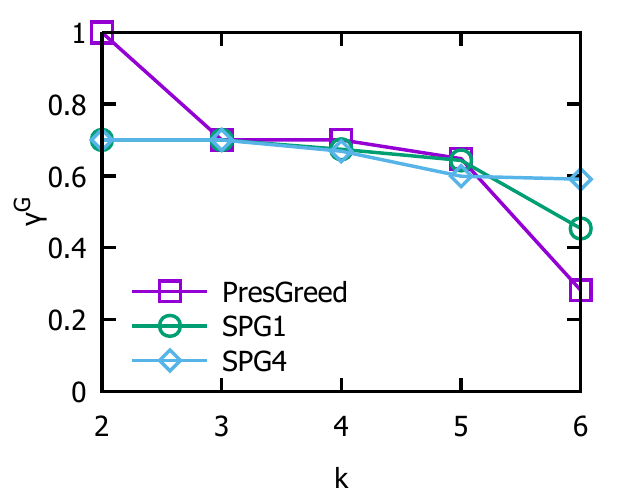}}
	\\
	\subfigure[Anti-correlated]{\includegraphics[width=0.195\textwidth]{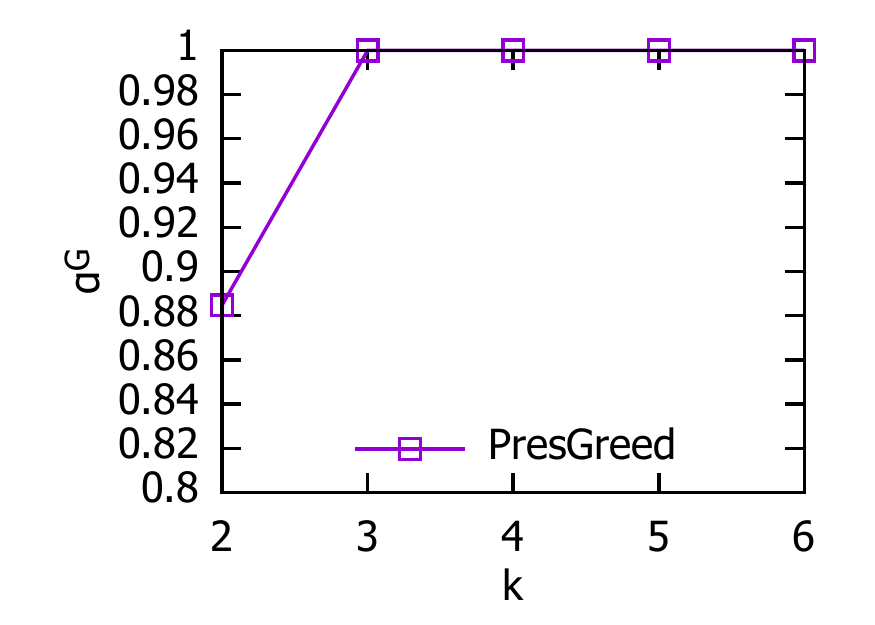}}
	\subfigure[NBA]{\includegraphics[width=0.195\textwidth]{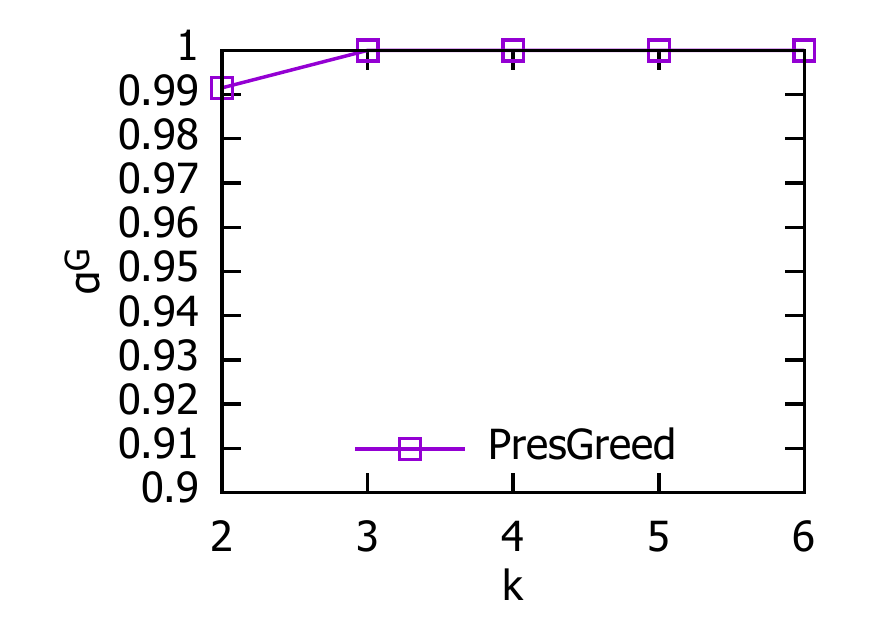}}
	\subfigure[Household]{\includegraphics[width=0.195\textwidth]{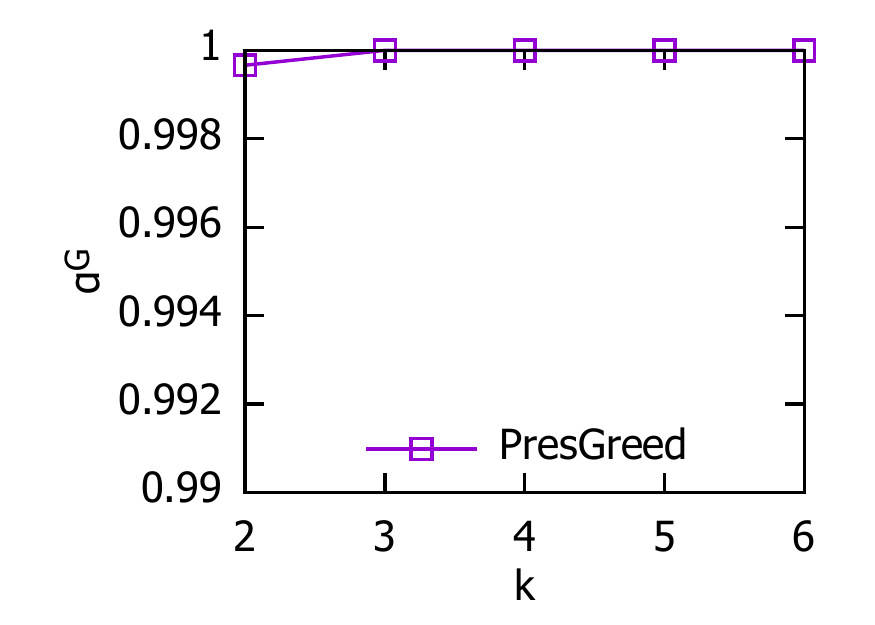}}
	\subfigure[Movie]{\includegraphics[width=0.195\textwidth]{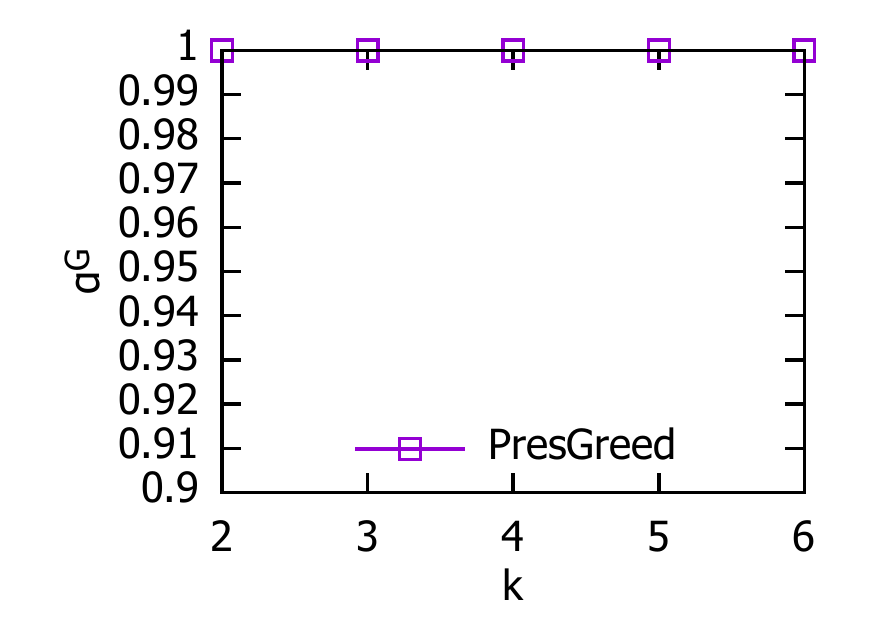}}
	\subfigure[Weather]{\includegraphics[width=0.195\textwidth]{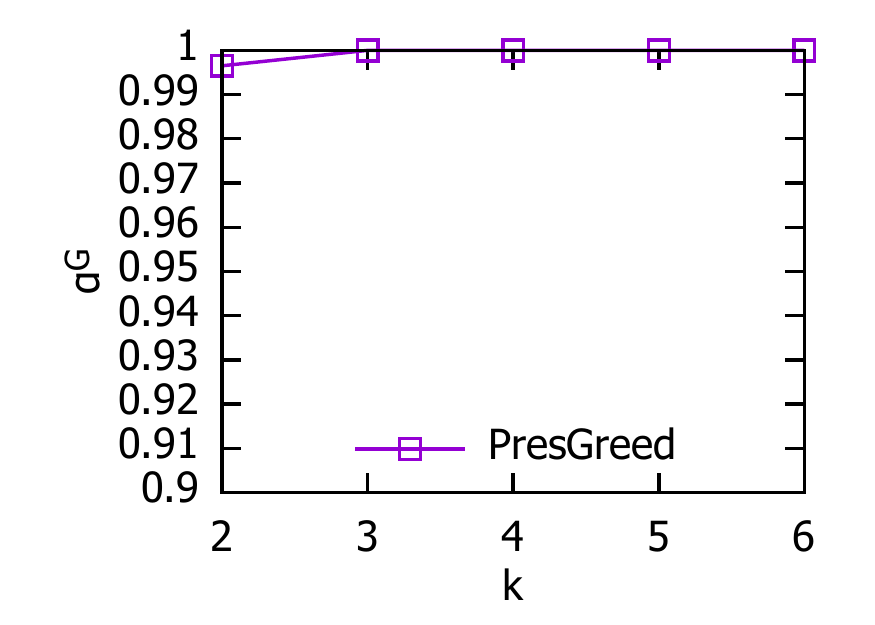}}	
	\caption{Performance comparisons. (\emph{a})-(\emph{e}), (\emph{f})-(\emph{j}) show the submodularity ratios and curvatures respectively for different values of $k$ on \textit{Anti-correlated}, \textit{NBA}, \textit{Household}, \textit{Movie} and \textit{Weather}.}
	\label{fig:src}
\end{figure*}

\subsection{Experiment Results}

\myparagraph{Maximum regret ratio and CPU time by varying $k$}
In Figure~\ref{fig:MHTk}, we report the maximum regret ratio and CPU time on all the datasets by varying $k$.
We omit the result of the \AlgNWF algorithm on the Weather dataset because it takes several days to obtain a solution.
For the \AlgKER algorithm, the result size of this algorithm is much larger than $k$, so we randomly return $k$ points as the solution as in \cite{Xie:2018}.
In high-dimensional datasets such as Movie and Weather, the \AlgHDR, \AlgHDG and \AlgKER algorithms all require too much memory to compute the solutions. Thus, we omit their performances in the Movie and Weather datasets. When $k<d$, \AlgHDG and \AlgSPH do not work \cite{Asudeh:2017,Xie:2018}, so we report their performance for $k\geq d$.

As shown, 
when $k$ is small, the maximum regret ratios of our proposed algorithms especially \AlgSPG are worse than the others. The reason is that \AlgSPG uses greedy policy and sampling to select points. But with $k$ increase, they are very close to each other except for the NBA and Household datasets. For the NBA dataset, our \AlgSPG performs better than HD-based algorithms and is comparable to others. 
In addition, when we need to select fewer data points in high-dimensional datasets (\ie $k<d$), such as Movie and Weather, the \AlgSPH algorithm does not work under this case, but our \AlgSPG algorithm can still perform well.
\AlgPG and \AlgSPG with different values of $\epsilon$ and $\lambda$ are very close to each other and decrease with $k$. For the real datasets, the maximum regret ratio is much smaller than that on the anti-correlated dataset, and they generally decrease with the values of $k$.

In terms of CPU time, by increasing $k$, the response time of all the algorithms increases since more data points are selected.
\AlgNWF, \AlgHDR and \AlgKER run much slower than the other algorithms on all the datasets. Under different settings of $\epsilon$ and $\lambda$,
\AlgSPG constantly outperforms \AlgPG and achieves up to 10 times speedup. The \AlgHDG algorithm is close to \AlgSPG, but it consumes a large amount of memory. Therefore, we omit it for the Movie and Weather datasets. The performance of \AlgSPG is close to \AlgSPH and even better in the Household dataset. However, \AlgSPH cannot work under the case of $k<d$.
As can be observed, for different values of $\epsilon$ and $\lambda$, the maximum regret ratio
reported by \AlgSPG is very close to that of \AlgPG but with much better performance in CPU time.

In summary, we can observe that \AlgSPG provides very compelling performance and can make a tradeoff between the CPU time and the quality of returned results. Though our \AlgSPG algorithms sacrifices a little regret ratio, it can work under any circumstances, especially for the large high-dimensional datasets, \eg the Weather dataset. 



\myparagraph{Maximum regret ratio and CPU time by varying $d$ and $n$}
For large $d$ or $n$, \AlgNWF and \AlgHDR are too time consuming. Therefore, we omit the evaluation of \AlgNWF and some results of \AlgHDR here.
In Figure~\ref{fig:MHTdn}(\emph{a})-(\emph{b}), we conduct the experiments on the anti-correlated datasets by varying $d$.
When $d$ increases, the maximum regret ratio and CPU time of all algorithms increase since more dimensions need to be processed.
With the increase in $d$, the maximum regret ratios of different algorithms are very close to each other. However, the CPU times of \AlgPG, \AlgKER and \AlgHS are always larger than that of \AlgSPG for different values of $\epsilon$ and $\lambda$. This is because \AlgPG requires more function evaluations, \AlgKER needs more time to compute available coreset and \AlgHS consumes plenty of time to solve a large number of hitting-set problems.
The CPU time of \AlgHDR increases very quickly and the performance drops quickly, especially in high-dimensional spaces.
\AlgHDG performs better for $d<6$, but the query time is longer compared with \AlgSPG and \AlgSPH when $d$ is large. This is because the discretized matrix in \AlgHDG can be large in high-dimensional spaces. The CPU time of \AlgSPG is close to \AlgSPH, but \AlgSPH has the aforementioned drawback when $k<d$.
We vary $n$ on the anti-correlated dataset and the results are shown in Figure~\ref{fig:MHTdn}(\emph{c})-(\emph{d}).
Similar trends can be observed among the algorithms.

\myparagraph{Regret ratio distribution}
Since \AlgSPG is a sampling-based method, we report the regret ratio distribution here to demonstrate the stability of \AlgSPG under different $\epsilon$ and $\lambda$ settings.
We run the algorithms 20 times with $k=25$ and record their distribution on all the datasets. The results are shown in Figure~\ref{fig:MHTdis}.
Figure~\ref{fig:MHTdis}(\emph{a}) only shows the regret ratios of the \AlgPG and \AlgSPH algorithms, which are the algorithms without sampling. The other nonsampling methods are quite time consuming, so we omit them.
Figure~\ref{fig:MHTdis}(\emph{b})-(\emph{e}) shows the regret ratio distribution for \AlgSPG. It records the number of times that the regret ratio has fallen into that interval. The size of each interval is set to approximately 0.05. We can see that most of the regret ratios of \AlgSPG are very close to those of \AlgPG.
The difference in the regret ratio between our \AlgSPG algorithm and the \AlgPG algorithm does not exceed 0.1. For the anti-correlated and Movie datasets, the difference does not exceed 0.05.
In addition, the difference of the regret ratio between our \AlgSPG algorithm and the \AlgSPH algorithm is similar to that of \AlgPG on the anti-correlated and NBA datasets. 
However, due to the high dimensions of the Movie and Weather datasets, \AlgSPH either does not work or performs poorly when computing the $k$ results. Our \AlgSPG algorithm can still provide stable performance.
We do not show the results on the Household dataset here, because the regret ratios of the \AlgPG, \AlgSPG and \AlgSPH algorithms are almost the same. Since the algorithms are conducted over the skyline points, the larger the skyline size is, the more effective our \AlgSPG algorithm will be.

\myparagraph{Submodularity ratio and curvature} We compute the submodularity ratio for each $S_i$, where $i\in[1,k-1]$. The results are shown in Figure~\ref{fig:src}(\emph{a})-(\emph{e}). We can see that the submodularity ratio is nonincreasing with the increase in $k$ for \AlgPG. For the \AlgSPG algorithm, its submodularity ratio is not strictly nonincreasing with the increase in $k$, due to the randomness of \AlgSPG.

As illustrated in Figure \ref{fig:src}(\emph{f})-(\emph{j}), with the increase in $k$ the curvatures are almost equal to $1$ for \AlgPG. According to the definition of curvature, for most cases, when $T\subseteq D$ and $|T|=k$, $\Delta_{S_{i-1}\cup T}(q_i)=0$, which makes the curvature equal to 1. From Theorem \ref{thm:preselection}, we can see that the approximation ratio approaches $1-1/e$.
Thus, it verifies the advantages of the \AlgPG algorithm.

\myparagraph{The evaluation of $\epsilon$ and $\lambda$}
In \AlgSPG, a larger $\epsilon$ or $\lambda$ denotes a smaller sample size. Here, we evaluate the performance of
\AlgSPG by varying $\epsilon$ and $\lambda$ on the anti-correlated dataset. The results are shown in Figure~\ref{fig:el}(\emph{a})-(\emph{b}).
As we can observe, the maximum regret ratio does not change much when we vary $\epsilon$ or $\lambda$ over a wide range.
However, the CPU time decreases considerably when the sample size drops.

\begin{figure}[!htb]
	\centering
	\subfigure[Maximum regret ratio]{%
		\includegraphics[width=0.23\textwidth]{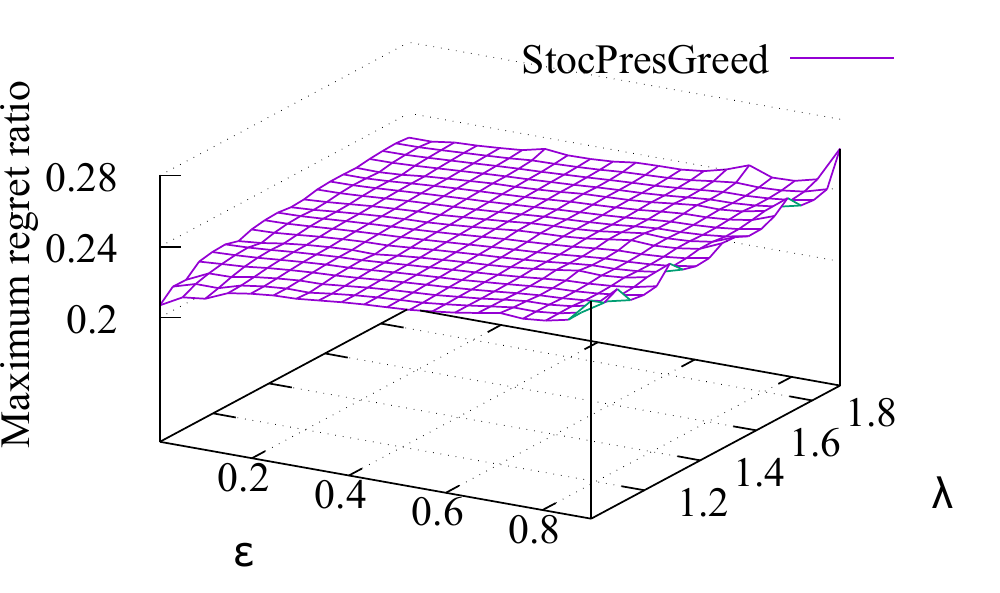}
	}
	\subfigure[CPU time]{%
		\includegraphics[width=0.23\textwidth]{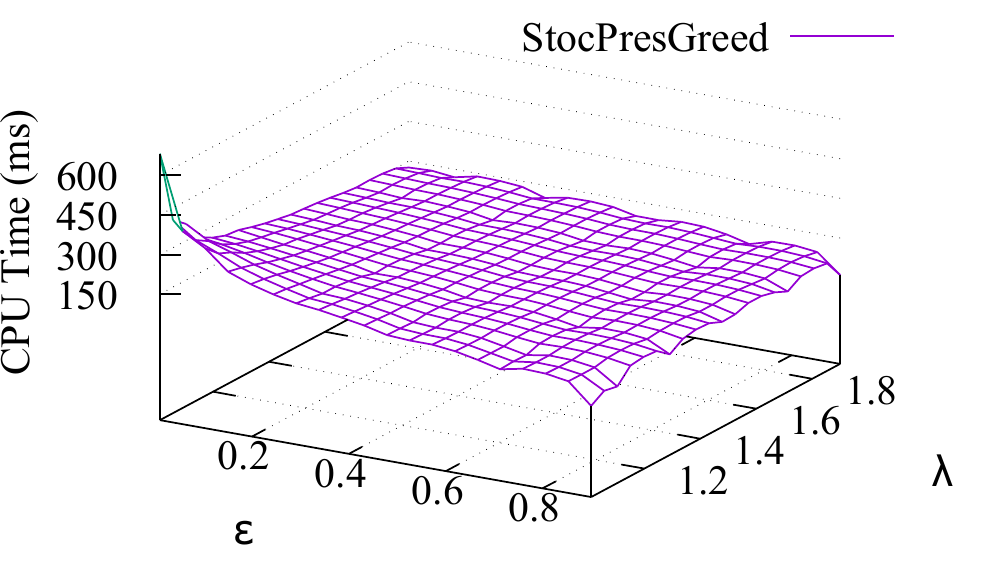}
	}	
	\caption{Performance comparisons. (\emph{a}) and (\emph{b}) show the maximum regret ratios and CPU times of the \AlgSPG for different values of $\epsilon$ and $\lambda$ on the \textit{anti-correlated} datasets.
	}
	\label{fig:el}
\end{figure}

\myparagraph{The upper bounds of maximum regret ratios} Since the \AlgHS, \AlgHDG and \AlgKER algorithms cannot output exact $k$ points, we report ony the upper bounds of the \AlgHDR and \AlgSPH algorithms along with our proposed algorithms. As Figure~\ref{fig:theo}(\emph{a})-(\emph{e}) shows, the upper bound of the maximum regret ratio of \AlgHDR is smaller than others and closer to the maximum regret ratio of the optimal solution. However, \AlgHDR is time consuming as shown in Figure \ref{fig:MHTk} and \ref{fig:MHTdn}. The upper bounds of the maximum regret ratios of \AlgPG and \AlgSPG are close to each other. 
For the \AlgSPH algorithm, its upper bounds are decided by both $k$ and $d$. As shown in Figure~\ref{fig:ubsphere}, the upper bound of the maximum regret ratio increases with increasing $d$, but for $k$, it basically remains unchanged.

\begin{figure*}[!htb]
	\centering	
	\subfigure[Anti-correlated]{\includegraphics[width=0.19\textwidth]{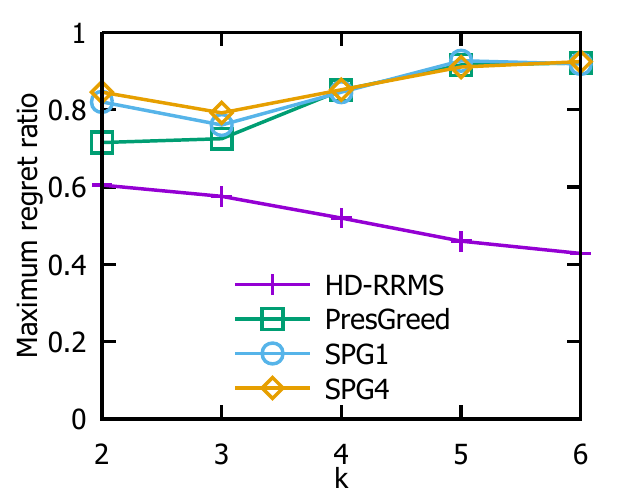}}
	\subfigure[NBA]{\includegraphics[width=0.19\textwidth]{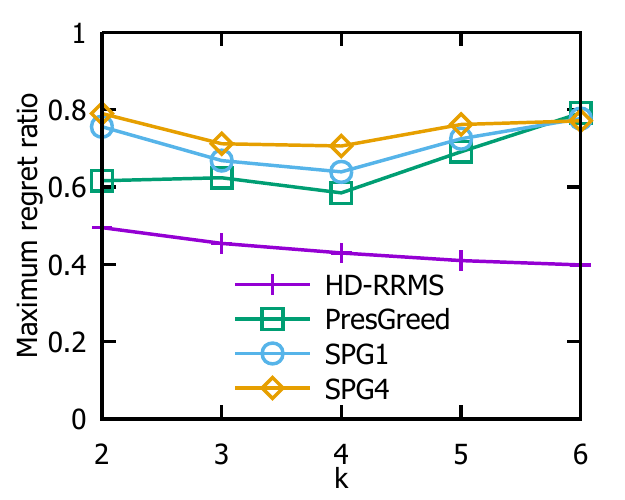}}
	\subfigure[Household]{\includegraphics[width=0.19\textwidth]{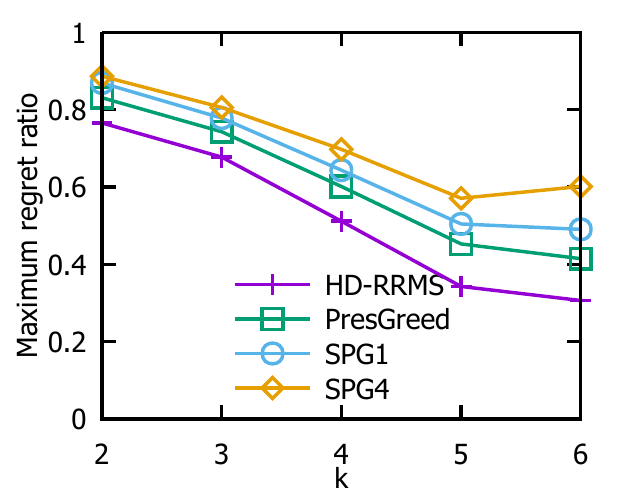}}
	\subfigure[Movie]{\includegraphics[width=0.19\textwidth]{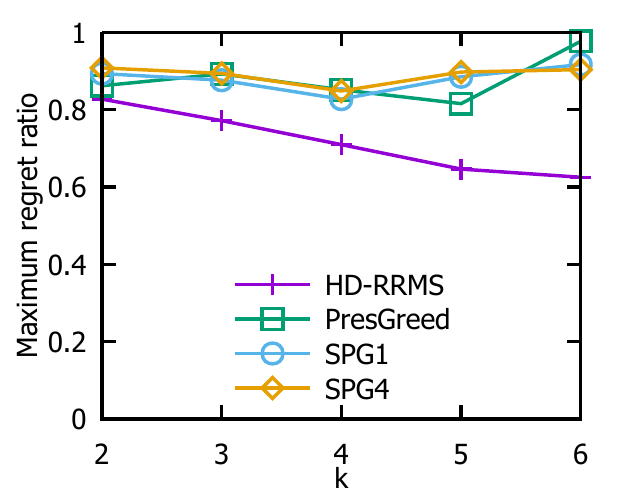}}
	\subfigure[Weather]{\includegraphics[width=0.19\textwidth]{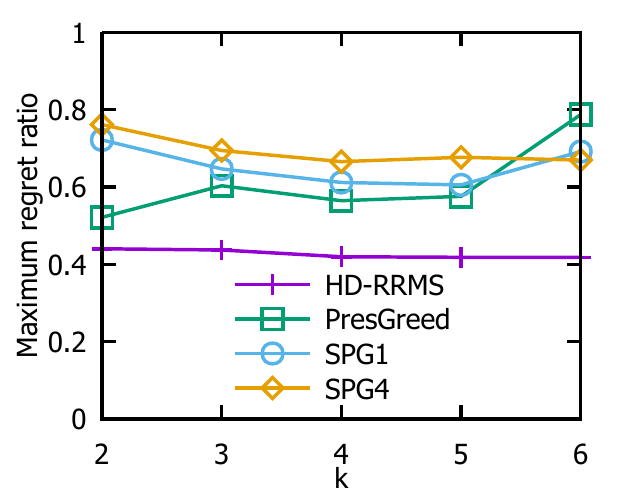}}
	\caption{Performance comparisons. (\emph{a})-(\emph{e}) show the upper bounds of the maximum regret ratio for different values of $k$ on \textit{Anti-correlated}, \textit{NBA}, \textit{Household}, \textit{Movie} and \textit{Weather}, respectively.}
	\label{fig:theo}
\end{figure*}

\begin{figure}[!htb]
	\centering
	\subfigure{\includegraphics[width=0.35\textwidth]{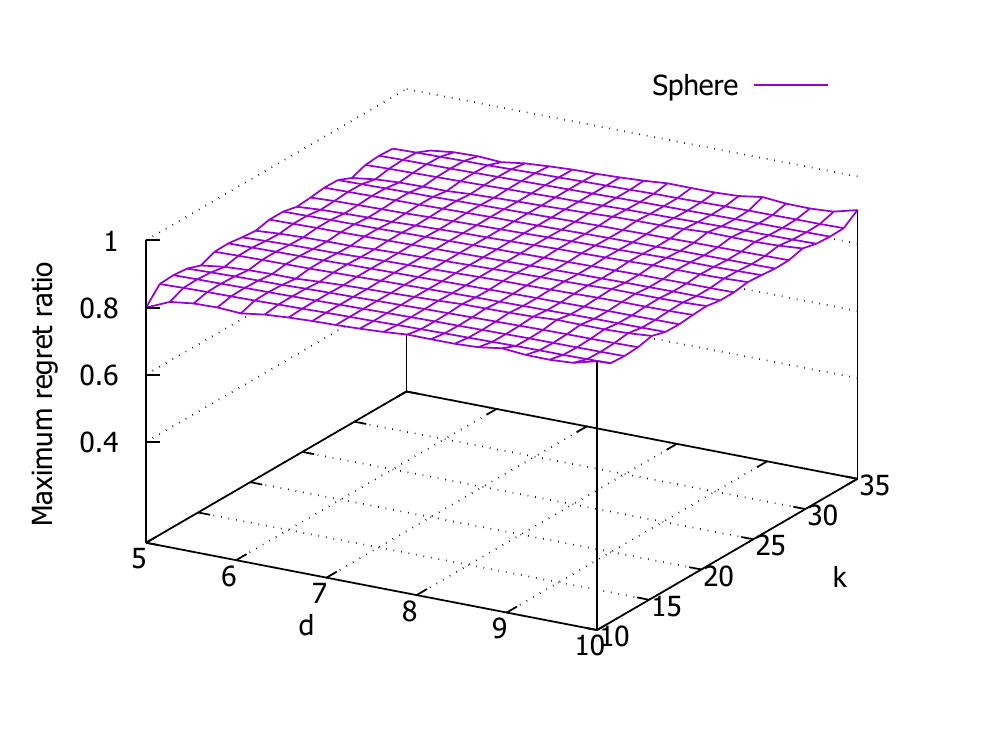}}
	\caption{Upper bounds of the maximum regret ratio of \AlgSPH for different values of $d$ and $k$.}
	\label{fig:ubsphere}
\end{figure}

\section{Related Work}
\label{sec:rel}

Top-$k$ queries \cite{Ilyas:2008} and skyline queries \cite{Borzsony:2001} are two well-known tools for multi-criteria decision making that have received considerable attention during the last two decades. 
Improved versions are also investigated to improve the quality of the returned results, such as the subspace skyline \cite{Yuan:2005}, the constrained skyline \cite{Papadias:2005,Zhang:2010}, the restricted skyline \cite{Ciaccia:2017,Ciaccia:2020}, the $k$-dominant skyline query \cite{Chan:2006}, top-$k$ representative skyline \cite{lin:2007}, distance-based representative skyline \cite{Tao:2009} and threshold-based preferences \cite{Sarma:2011}. 


Due to the inherent limitations of top-$k$ and skyline queries, Nanongkai et al. \cite{Nanongkai:2010} proposed the $k$-regret minimization problem.
The problem is NP-hard, and a greedy strategy is proposed.
However, the developed approach suffers from efficiency issues, and cannot scale well for large datasets. The following studies attempt to solve the problem from different perspectives.
Peng et al.~\cite{Peng:2014} attempted to reduce the candidate points from the whole skyline points to a small candidate set by using geometric properties.
Cao et al. \cite{Cao:2017} and Agarwal et al. \cite{Agarwal:2017} proposed the $\epsilon$-kernel algorithm independently, which can keep the maximum regret ratio at most $\epsilon$ with an output size of $O(\epsilon^{-\frac{d-1}{2}})$. However, the output sizes of these algorithms are uncontrollable. 
They made output size from $O(\epsilon^{-\frac{d-1}{2}})$ to $k$. Then the upper bound was $O(k^{-\frac{2}{d-1}})$ \cite{Xie:2018}. Agarwal et al. \cite{Agarwal:2017} also proposed the hitting-set algorithm, which returns $O(k\log k)$ points.
Asudeh et al. \cite{Asudeh:2017} interpreted the $k$-regret query in a $d$-dimensional dataset as a discretized matrix min-max problem and proposed two algorithms with upper bounds. 
Moreover, Xie et al. \cite{Xie:2018} designed the state-of-the-art Sphere algorithm whose upper bound on the maximum regret ratio was asymptotically optimal and restriction-free for datasets of any dimensionality. These bounds or theoretical results focus on the regret ratio instead of approximation guarantees. Qiu et al. \cite{Qiu:2018} and Dong et al. \cite{Dong:2019} also provided the sampling techniques for 1-RMS and $k$-RMS queries, respectively. However, the proposed method in \cite{Qiu:2018} is a special case of ours, where the sample factor equals 1, \ie $\lambda=1$. Additionally, they both do not provide any approximation guarantees. 

Several works extend Nanongkai et al. \cite{Nanongkai:2010} to some extent. Chester et al. \cite{Chester:2014} introduced the relaxation to $k$-regret minimizing sets, denoted as the $k$-RMS query. The efficiency of the $k$-RMS query was further improved by \cite{Cao:2017} based on the $\epsilon$-kernel and \cite{Agarwal:2017,Kumar:2018} via coresets and hitting sets. Faulkner et al. \cite{Faulkner:2015} and Qi et al. \cite{Qi:2018} extended the linear utility functions to \textsc{Convex}, \textsc{Concave}, \textsc{CES} utility functions and multiplicative utility functions for $k$-regret queries. 
Zeighami and Wong \cite{Zeighami:2019} proposed the metric of the average regret ratio to measure user satisfaction. 
To reduce the bounds of the regret ratio, Nanongkai et al. \cite{Nanongkai:2012} and Xie et al. \cite{Xie:2019} combined user interactions into the process of selection. 
Additionally, based on regret minimization concepts, \cite{Asudeh:2017,Asudeh:2019,Shetiya:2020,Xie:2020} focus on compact maxima, rank regret representative problems, unified algorithms for different aggregate norms and min-size versions of the regret minimization query. 
To reduce the bounds of regret ratio, Nanongkai et al. \cite{Nanongkai:2012} combine user's interactions into the process of selection. Moreover, Xie et al. \cite{Xie:2019} provide a strongly truthful interactive mechanism to leverage the regret ratios using true database tuples instead of artificial ones and provide provable performance guarantees. Zeighami and Wong \cite{Zeighami:2016} propose the metric of average regret ratio to measure user's satisfaction and further develop efficient algorithms to solve it \cite{Zeighami:2019}. Also, based on regret minimization concepts, \cite{Asudeh:2017,Asudeh:2019} focus on the compact maxima and rank regret representative problems, respectively. 
Moreover, the concept of the regret ratio is also adopted to solve the problems in the machine learning area, e.g., multi-objective submodular function maximization \cite{Soma:2017}. 

Due to the hardness of the problem, most of the studies utilize the greedy framework \cite{Peng:2014,Faulkner:2015,Xie:2018,Qi:2018,Qiu:2018,Dong:2019}. However, there is still no strict theoretical analysis of the approximation ratio in the original greedy framework. We fill the theoretical gap, and our developed techniques can be used by the existing research if the greedy framework is used.

For summodularity aspect, submodular function maximization has numerous applications in machine learning and database systems \cite{Soma:2014,Lin:2011}. Though the problems are NP-hard,
the greedy method can return a result with $1-1/e$ approximation ratio~\cite{Nemhauser:1978}. By leveraging the marginal gain property, lazy-update method is proposed to accelerate the search~\cite{Minoux:1978}. Badanidiyuru and Vondr\'{a}k \cite{Badanidiyuru:2014SODA} propose a centralized algorithm that achieves a $(1-1/e-\epsilon)$ approximation ratio using $O(n/\epsilon \log(n/\epsilon))$ function evaluations for general submodular functions, and a multistage algorithm is proposed to further accelerate the search.
In \cite{Mirzasoleiman:2015}, a linear time algorithm is proposed for cardinality constrained submodular maximization, which provides the same approximation ratio as \cite{Badanidiyuru:2014SODA} with $n\log \frac{1}{\epsilon}$ function evaluations.
The accelerated greedy algorithms \cite{Minoux:1978,Mirzasoleiman:2015} start from an empty set, and need more time for picking the first data point. While the approaches in \cite{Badanidiyuru:2014SODA} require much more functions evaluations when $n$ is large, and they are more effective when those submodular functions can be easily decomposed and approximated. In a word, all the approaches in above are not efficient for large-scale datasets.

\section{Conclusion}
\label{sec:conc}

As an important operator, the $k$-regret minimization query is investigated to provide users with high-quality results from a large dataset. The existing solution is efficient, but there is still much space left for improvement, especially for larger datasets. In addition, the approximation ratio of the returned results of the greedy framework is not discussed in the literature. In this paper, we conduct the first theoretical analysis and provide an approximation guarantee for the previous \AlgRDP greedy framework by utilizing the proposed happiness ratio concept. To reduce the evaluation cost and further speed
up the processing, a sampling-based method, \AlgSPG, is proposed that provides an $(1-e^{-\frac{(1-\epsilon)(k-1)\gamma}{\lambda k}})$ approximation ratio.
Moreover, careful analysis is presented to demonstrate the sample size required.
Experiments over real-world and synthetic datasets were conducted to verify the advantages of the proposed methods.

\section*{Acknowledgment}
This work is partially supported by the National Natural Science Foundation of China under grants U1733112, 61702260, 61802345 and the Fundamental Research Funds for the Central Universities under grant NS2020068.

\bibliographystyle{IEEEtran}
\bibliography{ref}

\begin{thebibliography}{10}
\providecommand{\url}[1]{#1}
\csname url@samestyle\endcsname
\providecommand{\newblock}{\relax}
\providecommand{\bibinfo}[2]{#2}
\providecommand{\BIBentrySTDinterwordspacing}{\spaceskip=0pt\relax}
\providecommand{\BIBentryALTinterwordstretchfactor}{4}
\providecommand{\BIBentryALTinterwordspacing}{\spaceskip=\fontdimen2\font plus
\BIBentryALTinterwordstretchfactor\fontdimen3\font minus
  \fontdimen4\font\relax}
\providecommand{\BIBforeignlanguage}[2]{{%
\expandafter\ifx\csname l@#1\endcsname\relax
\typeout{** WARNING: IEEEtran.bst: No hyphenation pattern has been}%
\typeout{** loaded for the language `#1'. Using the pattern for}%
\typeout{** the default language instead.}%
\else
\language=\csname l@#1\endcsname
\fi
#2}}
\providecommand{\BIBdecl}{\relax}
\BIBdecl

\bibitem{Ilyas:2008}
I.~F. Ilyas, G.~Beskales, and M.~A. Soliman, ``A survey of top-k query
  processing techniques in relational database systems,'' \emph{CSUR}, vol.~40,
  no.~4, pp. 11:1--58, 2008.

\bibitem{Borzsony:2001}
S.~B\"{o}rzs\"{o}ny, D.~Kossmann, and K.~Stocker, ``The skyline operator,'' in
  \emph{ICDE}, 2001.

\bibitem{Nanongkai:2010}
D.~Nanongkai, A.~D. Sarma, A.~Lall, R.~J. Lipton, and J.~Xu,
  ``Regret-minimizing representative databases,'' in \emph{VLDB}, 2010.

\bibitem{Peng:2014}
P.~Peng and R.~C.-W. Wong, ``Geometry approach for k-regret query,'' in
  \emph{ICDE}, 2014.

\bibitem{Faulkner:2015}
T.~K. Faulkner, W.~Brackenbury, and A.~Lall, ``k-regret queries with nonlinear
  utilities,'' in \emph{VLDB}, 2015.

\bibitem{Xie:2018}
M.~Xie, R.~C.-W. Wong, J.~Li, C.~Long, and A.~Lall, ``Efficient k-regret query
  algorithm with restriction-free bound for any dimensionality,'' in
  \emph{SIGMOD}, 2018.

\bibitem{Qi:2018}
J.~Qi, F.~Zuo, H.~Samet, and J.~C. Yao, ``K-regret queries using multiplicative
  utility functions,'' \emph{TODS}, vol.~43, no.~2, pp. 10:1--10:41, 2018.

\bibitem{Qiu:2018}
X.~Qiu, J.~Zheng, Q.~Dong, and X.~Huang, ``Speed-up algorithms
  for happiness-maximizing representative databases,'' in \emph{APWeb-WAIM
  Workshops}, 2018, pp. 321--335.

\bibitem{Dong:2019}
Q.~Dong and J.~Zheng, ``Faster algorithms for k-regret minimizing sets via
  monotonicity and sampling,'' in \emph{CIKM}, 2019, pp. 2213--2216.

\bibitem{Xie:2019VLDBJ}
M.~Xie, R.~C.-W. Wong, and A.~Lall, ``An experimental survey of regret
  minimization query and variants: bridging the best worlds between top-k query
  and skyline query,'' \emph{The VLDB Journal}, vol.~29, pp. 147--175, 2020.

\bibitem{Cao:2017}
W.~Cao, J.~Li, H.~Wang, K.~Wang, R.~Wang, R.~C.-W. Wong, and W.~Zhan,
  ``k-regret minimizing set: Efficient algorithms and hardness,'' in
  \emph{ICDT}, 2017.

\bibitem{Agarwal:2017}
P.~K. Agarwal, N.~Kumar, S.~Sintos, and S.~Suri, ``{Efficient Algorithms for
  k-Regret Minimizing Sets},'' in \emph{Proceedings of the 16th International
  Symposium on Experimental Algorithms (SEA)}, 2017.

\bibitem{Asudeh:2017}
A.~Asudeh, A.~Nazi, N.~Zhang, and G.~Das, ``Efficient computation of
  regret-ratio minimizing set: A compact maxima representative,'' in
  \emph{SIGMOD}, 2017.

\bibitem{Xie:2020}
M.~{Xie}, R.~C. {Wong}, P.~{Peng}, and V.~J. {Tsotras}, ``Being happy with the
  least: Achieving $\alpha$-happiness with minimum number of tuples,'' in
  \emph{ICDE}, 2020.

\bibitem{Bertsimas:1997}
D.~Bertsimas and J.~Tsitsiklis, \emph{Introduction to Linear
  Optimization}.\hskip 1em plus 0.5em minus 0.4em\relax Athena Scientific,
  1997.

\bibitem{Fujishige:2005}
S.~Fujishige, \emph{Submodular Functions and Optimization}.\hskip 1em plus
  0.5em minus 0.4em\relax Elsevier, 2005.

\bibitem{Nemhauser:1978}
G.~L. Nemhauser, L.~A. Wolsey, and M.~L. Fisher, ``An analysis of
  approximations for maximizing submodular set functions---$i$,''
  \emph{Mathematical Programming}, vol.~14, no.~1, pp. 265--294, 1978.

\bibitem{Das:2011}
A.~Das and D.~Kempe, ``Submodular meets spectral: Greedy algorithms for subset
  selection, sparse approximation and dictionary selection,'' in \emph{ICML},
  2011.

\bibitem{Bian:2017}
A.~A. Bian, J.~M. Buhmann, A.~Krause, and S.~Tschiatschek, ``Guarantees for
  greedy maximization of non-submodular functions with applications,'' in
  \emph{ICML}, 2017.

\bibitem{Krause:2014}
A.~Krause and D.~Golovin, ``Submodular function maximization,''
  \emph{Tractability: Practical Approaches to Hard Problems}, vol.~3, pp.
  71--104, 2014.

\bibitem{Mirzasoleiman:2015}
B.~Mirzasoleiman, A.~Badanidiyuru, A.~Karbasi, J.~Vondrak, and A.~Krause,
  ``Lazier than lazy greedy,'' in \emph{AAAI}, 2015.

\bibitem{Khanna:2017}
R.~Khanna, E.~Elenberg, A.~Dimakis, S.~Negahban, and J.~Ghosh, ``{Scalable
  Greedy Feature Selection via Weak Submodularity},'' in \emph{Proceedings of
  the 20th International Conference on Artificial Intelligence and Statistics
  (AISTATS)}, 2017.

\bibitem{Nanongkai:2012}
D.~Nanongkai, A.~Lall, A.~Das~Sarma, and K.~Makino, ``Interactive regret
  minimization,'' in \emph{SIGMOD}, 2012.

\bibitem{Yuan:2005}
Y.~Yuan, X.~Lin, Q.~Liu, W.~Wang, J.~X. Yu, and Q.~Zhang, ``Efficient
  computation of the skyline cube,'' in \emph{VLDB}, 2005, p. 241–252.

\bibitem{Papadias:2005}
D.~Papadias, Y.~Tao, G.~Fu, and B.~Seeger, ``Progressive skyline computation in
  database systems,'' \emph{TODS}, vol.~30, no.~1, pp. 41--82, 2005.

\bibitem{Zhang:2010}
M.~Zhang and R.~Alhajj, ``Skyline queries with constraints: Integrating skyline
  and traditional query operators,'' \emph{DKE}, vol.~69, no.~1, pp. 153 --
  168, 2010.

\bibitem{Ciaccia:2017}
P.~Ciaccia and D.~Martinenghi, ``Reconciling skyline and ranking queries,'' in
  \emph{Proceedings of the VLDB Endowment}, 2017, p. 1454–1465.

\bibitem{Ciaccia:2020}
------, ``Flexible skylines: Dominance for arbitrary sets of monotone
  functions,'' \emph{ACM Transactions on Database Systems (TODS)}, vol.~45,
  no.~4, 2020.

\bibitem{Chan:2006}
C.~Y. Chan, H.~V. Jagadish, K.~Tan, A.~K.~H. Tung, and Z.~Zhang, ``Finding
  k-dominant skylines in high dimensional space,'' in \emph{SIGMOD}, 2006.

\bibitem{lin:2007}
X.~Lin, Y.~Yuan, Q.~Zhang, and Y.~Zhang, ``Selecting stars: The k most
  representative skyline operator,'' in \emph{ICDE}, 2007.

\bibitem{Tao:2009}
Y.~Tao, L.~Ding, X.~Lin, and J.~Pei, ``Distance-based representative skyline,''
  in \emph{ICDE}, 2009.

\bibitem{Sarma:2011}
A.~Das~Sarma, A.~Lall, D.~Nanongkai, R.~J. Lipton, and J.~Xu, ``Representative
  skylines using threshold-based preference distributions,'' in \emph{ICDE},
  2011.

\bibitem{Chester:2014}
S.~Chester, A.~Thomo, S.~Venkatesh, and S.~Whitesides, ``Computing k-regret
  minimizing sets,'' in \emph{VLDB}, 2014.

\bibitem{Kumar:2018}
N.~Kumar and S.~Sintos, ``Faster approximation algorithm for the
  \emph{k}-regret minimizing set and related problems,'' in \emph{Proceedings
  of the 20th Workshop on Algorithm Engineering and Experiments (ALENEX)},
  2018.

\bibitem{Zeighami:2019}
S.~Zeighami and R.~C. Wong, ``Finding average regret ratio minimizing set in
  database,'' in \emph{ICDE}, 2019.

\bibitem{Xie:2019}
M.~Xie, R.~C.-W. Wong, and A.~Lall, ``Strongly truthful interactive regret
  minimization,'' in \emph{SIGMOD}, 2019.

\bibitem{Asudeh:2019}
A.~Asudeh, A.~Nazi, N.~Zhang, G.~Das, and H.~V. Jagadish, ``{RRR:} rank-regret
  representative,'' in \emph{SIGMOD}, 2019.

\bibitem{Shetiya:2020}
S.~Shetiya, A.~Asudeh, S.~Ahmed, and G.~Das, ``A unified optimization algorithm
  for solving regret-minimizing representative problems,'' 2020.

\bibitem{Zeighami:2016}
S.~Zeighami and R.~C.-W. Wong, ``Minimizing average regret ratio in database,''
  in \emph{SIGMOD}, 2016.

\bibitem{Soma:2017}
T.~Soma and Y.~Yoshida, ``Regret ratio minimization in multi-objective
  submodular function maximization,'' in \emph{AAAI}, 2017.

\bibitem{Soma:2014}
T.~Soma, N.~Kakimura, K.~Inaba, and K.-i. Kawarabayashi, ``Optimal budget
  allocation: Theoretical guarantee and efficient algorithm,'' in \emph{ICML},
  2014.

\bibitem{Lin:2011}
H.~Lin and J.~Bilmes, ``A class of submodular functions for document
  summarization,'' in \emph{Proceedings of the 10th Annual Conference of the
  North American Chapter of the Association for Computational Linguistics:
  Human Language Technologies (HLT-NAACL)}, 2011.

\bibitem{Minoux:1978}
M.~Minoux, ``Accelerated greedy algorithms for maximizing submodular set
  functions,'' in \emph{Proceedings of the 8th IFIP Conference on Optimization
  Techniques}, 1978.

\bibitem{Badanidiyuru:2014SODA}
A.~Badanidiyuru and J.~Vondr\'{a}k, ``Fast algorithms for maximizing submodular
  functions,'' in \emph{SODA}, 2014.

\end{thebibliography}

\end{document}